\documentclass{article}

\usepackage[english]{babel}
\usepackage{enumerate}
\usepackage{amsthm} 

\usepackage[a4paper,top=3cm,bottom=3cm,left=3cm,right=3cm,marginparwidth=1.75cm]{geometry}

\usepackage[numbers]{natbib}
\usepackage{pifont}

\usepackage{bm}

\usepackage{amsmath}

\usepackage{amsfonts}
\usepackage{bbm}
\usepackage{graphicx}
\usepackage[colorlinks=true, allcolors=blue]{hyperref}
\usepackage{mathtools}
\usepackage{breqn}
\usepackage{float}
\usepackage{xcolor}
\setlength{\parindent}{0em} 
\setlength{\parskip}{1em} 

\newtheorem{theorem}{Theorem}[section]
\newtheorem{lemma}[theorem]{Lemma}

\newtheorem{proposition}[theorem]{Proposition}

\newtheorem{example}[theorem]{Example}

\numberwithin{equation}{section}

\def\E{{\mathbb{E}}}
\newcommand{\R}{{\mathbb R}}
\newcommand{\Mid}{{\ \Big|\ }}

\definecolor{blue0}{RGB}{0,77,153} 
\definecolor{red0}{RGB}{179,0,77} 
\definecolor{green0}{RGB}{134,219,76} 
\definecolor{gray0}{RGB}{84,97,110}

\usepackage{mathtools}
\usepackage{authblk}
\mathtoolsset{showonlyrefs}
\typeout{get arXiv to do 4 passes: Label(s) may have changed. Rerun}

\title{The quintic Ornstein-Uhlenbeck volatility model that jointly calibrates SPX \& VIX smiles}

\author[1]{Eduardo Abi Jaber\thanks{eduardo.abi-jaber@polytechnique.edu. The first author is grateful for the financial support from the Chaires FiME-FDD, Financial Risks, Deep Finance \& Statistics and Machine Learning and systematic methods in finance at Ecole Polytechnique.}}
\author[2]{Camille Illand}
\author[3]{Shaun (Xiaoyuan)  Li\thanks{shaunlinz02@gmail.com. {{The third author is grateful for the finanical support provided by AXA Investment Managers and} we would like to together thank Salmane Lahdachi at AXA Investment Managers for very fruitful discussions and insightful comments.}}}
\affil[1]{Ecole Polytechnique, CMAP}
\affil[2,3]{AXA Investment Managers}
\affil[3]{Université Paris 1 Panthéon-Sorbonne, CES}

\begin{document}

\maketitle

\begin{abstract}
The quintic Ornstein-Uhlenbeck volatility model  is a stochastic  volatility model where the volatility process is a polynomial function of degree five of a single Ornstein-Uhlenbeck process with fast mean reversion and large vol-of-vol. The model  is able to achieve remarkable joint fits of the SPX-VIX smiles with only 6 effective parameters and an input curve that allows to match certain term structures. We provide several practical specifications of the input curve, study their impact on the joint calibration problem and consider additionally time-dependent parameters to help achieve better fits for longer maturities going beyond 1 year. Even better, the model remains very simple and tractable for pricing and calibration: the VIX squared  is again polynomial in the Ornstein-Uhlenbeck process, leading to efficient VIX derivative pricing by a simple integration against a Gaussian density; simulation of the volatility process is exact; and pricing SPX products {derivatives} can be done efficiently and accurately by standard Monte Carlo techniques with suitable  antithetic and control variates.
\end{abstract}

\begin{description}
\item[JEL Classification:] G13, C63, G10.  
\item[Keywords:] SPX and VIX modeling, Stochastic volatility, Pricing, Calibration.
\end{description}

\section{Introduction}
Since the financial crisis of 2008, derivatives on volatility  became increasingly popular for hedging purposes and   for directional trading especially when combined with the  underlying stock index.   On the US market, the VIX index introduced by the CBOE became one of the most widely followed volatility index.  By construction, the VIX expresses an interpolation between several points of the SPX implied volatility term structure.   This motivates the need for a consistent modeling of the SPX and VIX. 

By joint SPX--VIX calibration problem, we mean the calibration of a model across several maturities to European call/put options on SPX and VIX together with VIX futures.  Such joint calibration turns out to be quite challenging for several reasons: multitude of instruments to be calibrated (SPX/VIX options and VIX futures, so three types of derivatives) across several maturities (to stay consistent with the construction of the VIX), characterized by an upward sloping VIX implied volatility in contrast with the important at-the-money (ATM) SPX skew that becomes more pronounced for smaller maturities.

Several attempts of joint calibration have been made with varying degree of success. However, in general the models and/or the techniques  considered are sophisticated and make use of  jump processes \cite{baldeaux2014consistent,cont2013consistent,kokholm2015joint,pacati2018smiling,papanicolaou2014regime}; non-Markovian rough volatility \cite{bondi2022rough,gatheral2020quadratic,rosenbaum2021deep} and  path-dependent volatility \cite{guyon2022volatility}; multiple-factors  \cite{fouque2018heston,goutte2017regime, guyon2022volatility,romer2022empirical}; optimal transport \cite{guo2022joint, guyon2020joint, guyon2022dispersion}, randomization of the parameters \cite{grzelak2022randomization} and neural SDEs \cite{guyonneural} to just name a few. The proposed solutions for the joint calibration problem are hence rather challenging to put into practice and need specific advanced numerical methods. This is the main motivation of our work.

Recently, the work of  \cite{abi2022joint} identified for the first time a conventional one-factor Markovian continuous stochastic volatility model  that is capable of achieving remarkable fits for a wide range of maturity slices of SPX and VIX implied volatilities together with the term structure of VIX futures.  It is also  shown 
on an extensive empirical study between 2012 and 2022, that contrary to common beliefs, the one factor Markovian model can jointly calibrate SPX and VIX without appealing to multiple-factors, jumps, rough volatility or path-dependency and can achieve even better performances.  In \cite{abi2022joint}, pricing of VIX and SPX derivatives has been done using quantization techniques and neural networks in order to ensure a fair comparison between Markovian and non-Markovian models. 

In the present work, we focus on the Markovian model identified in \cite{abi2022joint} and we  show that the model is tractable in addition to being remarkably flexible. The dynamics of the stochastic volatility process  in this model are given by a polynomial function of degree five  of a single Ornstein-Uhlenbeck process  with fast mean reversion and large vol-of-vol. Hence the name: quintic Ornstein-Uhlenbeck volatility model. The model has only 6 effective parameters  and an input curve that allows to match certain term structures. In particular, we will highlight the role of the input curve on the joint calibration problem: a parametric forward variance curve can be used when calibrating few slices; if the number of slices is increased then the input curve is first extracted from  the forward variance curve of the market and then tweaked in the calibration process. We also consider additionally time-dependent parameters to help achieve better fits for longer maturities going beyond 1 year.

We show that the model is  tractable as it offers an 
explicit expression for the  VIX squared which is again  polynomial in the driving Ornstein-Uhlenbeck factor, leading to efficient VIX derivative pricing by integrating directly against a Gaussian density. Simulation of the volatility process  is exact so that  pricing SPX products can be done efficiently and accurately by standard Monte Carlo techniques with suitable  antithetic and control variates. We also provide a   notebook with our implementation 
 here: \url{https://colab.research.google.com/drive/14nh9civ_wgQv283eshBWnr146w7Xsbi5?usp=sharing}.

For the first time in the literature, remarkable joint fits of SPX and VIX volatility surfaces and VIX futures are achieved between 1 week and beyond 1 year. Although it is challenging, but possible, for another model to achieve similar fits,  it would be very difficult to do so with   a simpler continuous model  than our quintic Ornstein-Uhlenbeck volatility model.

\textbf{Outline.} Section~\ref{S:model} introduces the model. Sections~\ref{S:vix} and \ref{S:spx} detail the pricing of VIX and SPX derivatives in the model. Calibration results on market data are shown in Sections~\ref{S:calibration} and \ref{appendix}.   Appendix A proves the martingality of the underlying process of the model.

\section{The quintic Ornstein-Uhlenbeck volatility model}\label{S:model}
The dynamics of the stock price $S$, with no interest nor dividends, is given by
\begin{equation}\label{polynomial_model}
  \begin{aligned}
    \frac{dS_t}{S_t} &= \sigma_tdB_t,\\
    \sigma_t &= \sqrt{\xi_0(t)}\frac{p(X_t)}{\sqrt{\E \left[p(X_t)^2\right]}}, \quad       p(x) =\alpha_0 + \alpha_1 x + \alpha_3 x^3 + \alpha_5 x^5,
\\
      X_t &= \varepsilon^{H-1/2} \int_0^t e^{-(1/2-H)\varepsilon^{-1}(t-s)} dW_s,
  \end{aligned}
  \end{equation}
with $ B=\rho W + \sqrt{1-\rho^2} W^{\perp}$, $(W,W^{\perp})$ a two-dimensional Brownian motion on a risk-neutral filtered probability space $(\Omega, \mathcal F,(\mathcal F_t)_{t\geq 0}, \mathbb Q )$, $\rho \in [-1,1]$, non-negative coefficients $\alpha_0,\alpha_1,\alpha_3,\alpha_5\geq 0$   ($\alpha_2=\alpha_4 = 0$), $\varepsilon>0$, $H\in(-\infty, 1/2]$ and an input curve $\xi_0 \in L^2([0,T],\mathbb R_{{+}})$ for any $T>0$, allowing the model to match certain term-structures observed on the market. For instance, the normalization  $\sqrt{\E \left[p(X_t)^2\right]}$ allows $\xi_0$ to match the market forward variance curve since 
\begin{align}\label{eq:fwdvarcalib}
    \mathbb E\left[ \int_0^t \sigma_s^2 ds \right] = \int_0^t \xi_0(s)ds, \quad t\geq 0.
\end{align}
The process $X$ driving the volatility is an Ornstein-Uhlenbeck process with a fast mean reversion of order $(1/2-H)\varepsilon^{-1}$ and a large vol-of-vol of order $\varepsilon^{H-1/2}$ for small values of $\varepsilon$, that is 
\begin{align}\label{eq:sde}
    dX_t =  - (1/2-H)\varepsilon^{-1} X_t dt + \varepsilon^{H-1/2}dW_t.
\end{align}
Such parametrizations  are reminiscent of   the fast regimes  extensively studied by \citet{fouque2003multiscale}, see also   \cite[Section 3.6]{fouque2000derivatives}, which corresponds to the case $H=0$. 
They can also be linked to  more complex models such as  jump models   \cite{mechkov2015fast,abireconcile} for $H\leq -1/2$;
and rough volatility models \cite{abi2022joint,abireconcile} for which  $H \in (0,1/2)$ would play the role of the Hurst index.  Letting the parameter $H\in(-\infty, 1/2]$ free in our model  introduces more flexibility    and leads to better fits than in the aforementioned models.   Another advantage of such parametrization is to stabilize the calibrated value of $H$ through time as opposed to calibrating directly on mean reversion and vol-of-vol parameters which are less stable through time, see \cite[Figure 3]{abi2022joint}.

Taking $p$ a polynomial of degree five  allows us to reproduce the upward slope of the VIX smile. Restricting the coefficients $\alpha$ to be non-negative (with $\alpha_2=\alpha_4=0$) ensures the sign of the at-the-money skew to be the same as $\rho$,  see \cite{abi2022joint} for more details, as well as ensuring the martingale property of $S$, whenever $\rho \leq 0$ and  $\alpha_5>0$, see Appendix~\ref{A:mart} below.

We fix $\varepsilon=1/52$ to further reduce the parameters, which gives 6 calibratable parameters:
\begin{align}\label{eq:THeta}
\Theta := \{\alpha_0, \alpha_1, \alpha_3, \alpha_5, \rho, H \},
\end{align}
 plus the input curve $\xi_0(\cdot)$. Numerical experiments show no significant adverse impact on the joint calibration quality by narrowing the number  of parameters.

\section{Pricing VIX derivatives}\label{S:vix}

\paragraph{An explicit expression for the VIX.}
One major advantage of our model is an explicit expression of the VIX. {In continuous time, the VIX can be expressed as
  \begin{equation}\label{eq:vixdef}
    \mbox{VIX}_T^2 = -\frac {2}{\Delta} \E \left[\log(S_{T+\Delta}/S_T)\mid \mathcal{F}_T \right] \times 100^2  = \frac{100^2}{\Delta} \int_T^{T+\Delta}\xi_T(u) du,
  \end{equation}
with $\Delta= 30$ days and  $\xi_T(u):=\E \left[\sigma_u^2\mid \mathcal{F}_T\right]$  the forward variance process which can be computed explicitly in our model as follows. First, we fix $T\leq u$ and rewrite $X$ as
  \begin{equation}\label{eq6}
X_u = {X_T e^{-(1/2-H)\varepsilon^{-1}(u-T)}} + {\varepsilon^{H-1/2}\int_T^u e^{-(1/2-H)\varepsilon^{-1}(u-s)}dW_s}=:Z_{T}^{u} + {G_{T}^{u}} , 
  \end{equation}
  then, setting 
 $$g(u)=\mathbb E[p(X_u)^2],$$ 
we have that
  \begin{equation}\label{eq7}
    \xi_T(u) = \E \left[\sigma_u^2\mid \mathcal{F}_T\right] = \frac{\xi_0(u)}{g(u)} \E \left[\left(\sum_{k=0}^5\alpha_k X_u^k\right)^2 \Mid \mathcal{F}_T \right] = \frac{\xi_0(u)}{g(u)} \E \left[\sum_{k=0}^{10}(\alpha *\alpha)_k X_u^k \Mid \mathcal{F}_T\right],
  \end{equation}
where $(\alpha *\alpha)_k = \sum_{j=0}^{k}\alpha_j\alpha_{k-j}$ is the discrete convolution. Using the Binomial expansion, we can further develop the expression for $\xi_T(u)$ in terms of $Z^u$ and $G^u$ to get
\begin{equation}\label{fwd_var_process}
    \xi_T(u)=\frac{\xi_0(u)}{g(u)}\sum_{k=0}^{10}\sum_{i=0}^{k} (\alpha *\alpha)_k {k\choose i} \left(X_Te^{-(1/2-H)\varepsilon^{-1}(u-T)}\right)^{i}\E \left[(G_T^u)^{k-i} \right],
  \end{equation}
 where we used the fact that $Z_T^u$ is $\mathcal F_T$-measurable and that  $G^u_T$  {is independent of $\mathcal F_T$},  with ${k \choose i }= k!/((k-i)!i!)$ the binomial coefficient. Furthermore, $G^u_T$ is a Gaussian random variable with mean $0$ and variance $\frac{\varepsilon^{2H}}{1-2H}(1-e^{-(1-2H)\varepsilon^{-1}(u-T)})$. Recall that for a Gaussian variable $Y \sim  \mathcal{N} \left(0, \sigma_Y^2 \right)$, its moments $\E \left[Y^p \right]$ for $p\in \mathbb{N}$ can be computed explicitly:
\begin{align}\label{eq:momentgaussian}
  \E \left[Y^p \right] =
    \begin{cases}
      0 & \text{if $p$ is odd}\\
      \sigma_Y^{p}(p-1)!! & \text{if $p$ is even}\\
    \end{cases}       
\end{align}
with $p!!$ the double factorial. Therefore all moments of $\E \left[(G_T^u)^i \right]$ are given explicitly.

Going back to \eqref{eq:vixdef} and plugging the expression \eqref{fwd_var_process},  the explicit expression of the $\mbox{VIX}_T^2$ turns out to be polynomial in $X_T$:
\begin{align}
 \mbox{VIX}_T^2 &= \frac{100^2}{\Delta} \sum_{k=0}^{10} \sum_{i=0}^{k} (\alpha *\alpha)_k {k\choose i} \int_T^{T+\Delta} \frac{\xi_0(u)}{g(u)}\mathbb{E}\left[(G_T^u)^{k-i} \right] e^{-(1/2-H)\varepsilon^{-1}(u-T)i}du  X_T^i \\ 
    &= \frac{100^2}{\Delta} \sum_{i=0}^{10} \sum_{k=i}^{10} \left((\alpha *\alpha)_k {k\choose i} \int_T^{T+\Delta} \frac{\xi_0(u)}{g(u)}\mathbb{E}\left[(G_T^u)^{k-i} \right] e^{-(1/2-H)\varepsilon^{-1}(u-T)i}du \right) X_T^i \\
    &= \frac{100^2}{\Delta} \sum_{i=0}^{10}  \beta_i X_T^{i},\label{eq:VIXclosed}\\
\end{align}
where
\[
    \beta_i = \sum_{k=i}^{10} (\alpha *\alpha)_k {k\choose i} \int_T^{T+\Delta} \frac{\xi_0(u)}{g(u)} \E \left[(G_T^u)^{k-i} \right] \left(e^{-(1/2-H)\varepsilon^{-1}(u-T)i}\right) du.
\]
The integral  {inside $\beta_i$} can be easily computed, at least numerically for a variety of choices for $\xi_0(\cdot)$.

\paragraph{Pricing VIX derivatives.} Thanks to the closed expression of \eqref{eq:VIXclosed}, $\mbox{VIX}_T^2$ is a polynomial in $X_T$ that we denote by $h(X_T)$. Since $X_T$ is Gaussian with mean $0$ and variance $\sigma_{X_T}^2 = \frac{\varepsilon^{2H}}{1-2H}(1-e^{-(1-2H)\varepsilon^{-1}T})$, pricing $\mbox{VIX}$ derivatives with payoff function $\Phi$ is immediate by integrating directly against the standard Gaussian density:
\begin{equation}\label{vix_derivative}
\E \left[\Phi(\mbox{VIX}_T) \right] = \E \left[\Phi\left(\sqrt{h(X_T)}\right) \right] = \frac 1 {\sqrt{2\pi}} \int_{\R} \Phi\left(\sqrt{h\left( \sigma_{X_T} x\right)}\right)e^{-x^2/2}{dx}.
\end{equation}

\begin{example}
 To price VIX future prices, set $\Phi(v) = v$ and to price VIX vanilla call price, set $\Phi(v) = (v-K)^{+}$. This integral \eqref{vix_derivative} can be computed efficiently using a variety of quadrature techniques. The Gaussian quadrature with $400$ nodes seems to be more than enough to price accurately VIX call and future prices.
\end{example}

\section{Pricing SPX derivatives}\label{S:spx}

To price SPX derivatives, we resort to using Monte Carlo simulations. Since $X$ is a Ornstein-Uhlenbeck process, it can be simulated exactly as opposed to using the Euler scheme which is often inaccurate in a fast mean reversion regime. To simulate $X$, first define
\[
\Tilde{X}_t = X_t e^{\frac{1/2-H}{\varepsilon}t} = \varepsilon^{H-1/2}\int_0^t e^{\frac{1/2-H}{\varepsilon}s}dW_s.
\]
Then, $\Tilde{X}$ can be simulated recursively by
\[
\Tilde{X}_{t_{i+1}} = \Tilde{X}_{t_{i}}+ \sqrt{\varepsilon^{2H}/(1-2H)}\left(e^{\frac{1-2H}{\varepsilon}t_{i+1}}-e^{\frac{1-2H}{\varepsilon}t_{i}}\right)Y_i,
\]
with $Y_i$ i.i.d. standard Gaussian. To get back to $X_{t_{i+1}}$ we just divide $\Tilde{X}_{t_{i+1}}$ by $e^{\frac{1/2-H}{\varepsilon}t_{i+1}}$. This setting allows us to easily vectorize computations.

To simulate the process $\log(S)$, we use the Euler scheme together with antithetic and control variates, the so called  turbocharging method as outlined in \cite{mccrickerd2018turbocharging}. This means we only need to simulate the part of $\log(S)$ that is $\mathcal{F}^W$ measurable, we call this $S^W$ and can be simulated as
\[
\log(S^W)_{t_{i+1}} = \log(S^W)_{t_{i}}-1/2 \left(\rho \sigma_{t_{i}} \right)^2 \left(t_{i+1}-t_{i}\right)+ \rho \sigma_{t_i} \sqrt{t_{i+1}-t_{i}}Y_i.
\]
The main idea of the turbocharging method is to 1) take advantage of the conditional log-normality of $S$ with respect to   $\mathcal{F}^{W}$, hence removing the MC error from simulating $W^{\perp}$, and 2) apply the control variate in the form of a time option where one can again take advantage of the log-normality and closed form solution. We refer readers to \cite{mccrickerd2018turbocharging}}  for more details on the method and to the   notebook mentioned in the introduction for our implementation.

\section{SPX/VIX Joint calibration}\label{S:calibration}

We now address the SPX-VIX joint calibration problem, that is the  calibration of our model to SPX European options, VIX European options and VIX futures across several maturities. Ideally, one should calibrate for SPX options maturity up to one month ahead of that of the VIX options, given that VIX encodes expected level of volatility for the next 30 days by definition. 

The calibration of VIX futures is necessary as it is used to calculate VIX implied volatility. Recall the implied volatility is calculated by inverting  the Black and Scholes formula, that is, for a given call price $C_0 (K,T)$ with strike $K$ and maturity $T$, we find the unique $\sigma_(K,T)$ such that 
  \begin{equation}
    C_0 (K,T) = F(T)\mathcal{N}(d_1)-K\mathcal{N}(d_2)
  \end{equation}
with
  \begin{equation}
d_1 = \frac{\log\left(F(T)/K\right)+ \frac 1 2  \sigma(K,T)^2 T}{ \sigma(K,T) \sqrt{T}}, \quad d_2 = d_1 - \sigma(K,T) \sqrt{T},
  \end{equation}
where  $\mathcal{N}(x)$ is the cumulative density function of the standard Gaussian distribution and $F(T)$ denotes the futures price of the index: $F(T)=\E\left[S_T\right]=S_0$ for the SPX in our model \eqref{polynomial_model} and  $F(T)=\E\left[\mbox{VIX}_T\right]$ for the VIX.

To calibrate our model, we solve the following optimisation problem involving sum of root mean squared error (RMSE):
\begin{equation}\label{calibration_error}
  \begin{aligned}
    &\min_{\Theta} \Bigg\{  c_1 \sqrt{\sum_{i,j}  \Big( \sigma_{spx}^{\Theta}(T_i,K_j) - \sigma_{spx}^{mkt}(T_i,K_j) \Big)^2} +    c_2 \sqrt{\sum_{i,j} \Big( \sigma_{vix}^{\Theta}(T_i,K_j) - \sigma_{vix}^{mkt}(T_i,K_j) \Big)^2}\\ 
    &\quad \quad \quad \quad \quad  +  c_3 \sqrt{\sum_{i} \Big(F^{\Theta}_{vix}(T_i)-F^{mkt}_{vix}(T_i) \Big)^2}\Bigg\}.
  \end{aligned}
\end{equation}

Here, $\sigma_{spx}^{mkt}(T_i,K_j)$, $\sigma_{vix}^{mkt}(T_i,K_j)$ represent market SPX-VIX implied volatility with maturity $T_i$ and strike $K_j$. $F^{mkt}_{vix}(T_i)$ is the market VIX futures price maturing at $T_i$. $\sigma_{spx}^{\Theta}(T_i,K_j)$, $\sigma_{vix}^{\Theta}(T_i,K_j)$ and $F^{\Theta}_{vix}(T_i)$ represent the same instruments, but coming from our model. The coefficients $c_1$, $c_2$ and $c_3$ are some positive numbers used to assign different weights to the errors in SPX-VIX implied volatility and VIX futures price. We chose arbitrarily $c_1 = 1,c_2 = 0.1, c_3 = 0.5$ for our numerical experiments.

We will now show how our model is well adapted to produce joint fits between SPX/VIX{, with daily SPX/VIX joint implied volatility surface data purchased from the CBOE website \url{https://datashop.cboe.com/}}. 

\paragraph{Extracting the forward variance curve $\xi_0(\cdot)$.}  
Using  the well-known replication formula for the log-contract in \cite{carr2001towards}, we construct $\xi_0(\cdot)$ such that  
\begin{align}\label{carr_madan_formula}
    \int_{T_i}^{T_{i+1}} \xi_0(s)ds &= 2\left(\int_0^{S_0}\frac{P_0(K,T_{i+1})}{K^2}dK + \int_{S_0}^{+\infty}\frac{C_0(K,T_{i+1})}{K^2}dK\right) \\
    &- 2\left(\int_0^{S_0}\frac{P_0(K,T_{i})}{K^2}dK + \int_{S_0}^{+\infty}\frac{C_0(K,T_{i})}{K^2}dK\right),
\end{align}
where $T_i$ are SPX option maturities from market data and $C_0(K,T)$ and $P_0(K,T)$ the price of a call/put option with strike $K$ and maturity $T$. Since market prices for out of the money call/put options are not always available,  we first interpolate the SPX market implied volatility surface, each slice separately, using methods like SVI or SABR (after checking for arbitrage), and use the fitted surface to compute the integral above.

We then approximate $\xi_0(t)$ by passing a cubic spline interpolation with nodes $(t_i, x_i)$ with $t_i=(T_i+T_{i+1})/2$ and $x_i = \sqrt{\int_{T_i}^{T_{i+1}} \xi_0(s)ds}$ and then square the interpolation to ensure positivity of the forward variance curve. Of course, piece-wise constant between $[T_i, T_{i+1})$ can also be used.

During the calibration procedure, we will let the optimisation algorithm move the model parameters $\Theta$ defined in \eqref{eq:THeta} and make adjustments to the value of the spline nodes $x_i$ as necessary to jointly fit SPX and VIX derivatives.

Figure \ref{fig:joint_fit} shows the joint fit on the 23 October 2017, with calibrated parameters $\rho = -0.6843, H = -0.0358, (\alpha_0, \alpha_1, \alpha_3, \alpha_5) = (0.5907,1,0.2893,0.0549)$:

  \begin{figure}[H]
    \centering
    \includegraphics[width=0.5\textwidth]{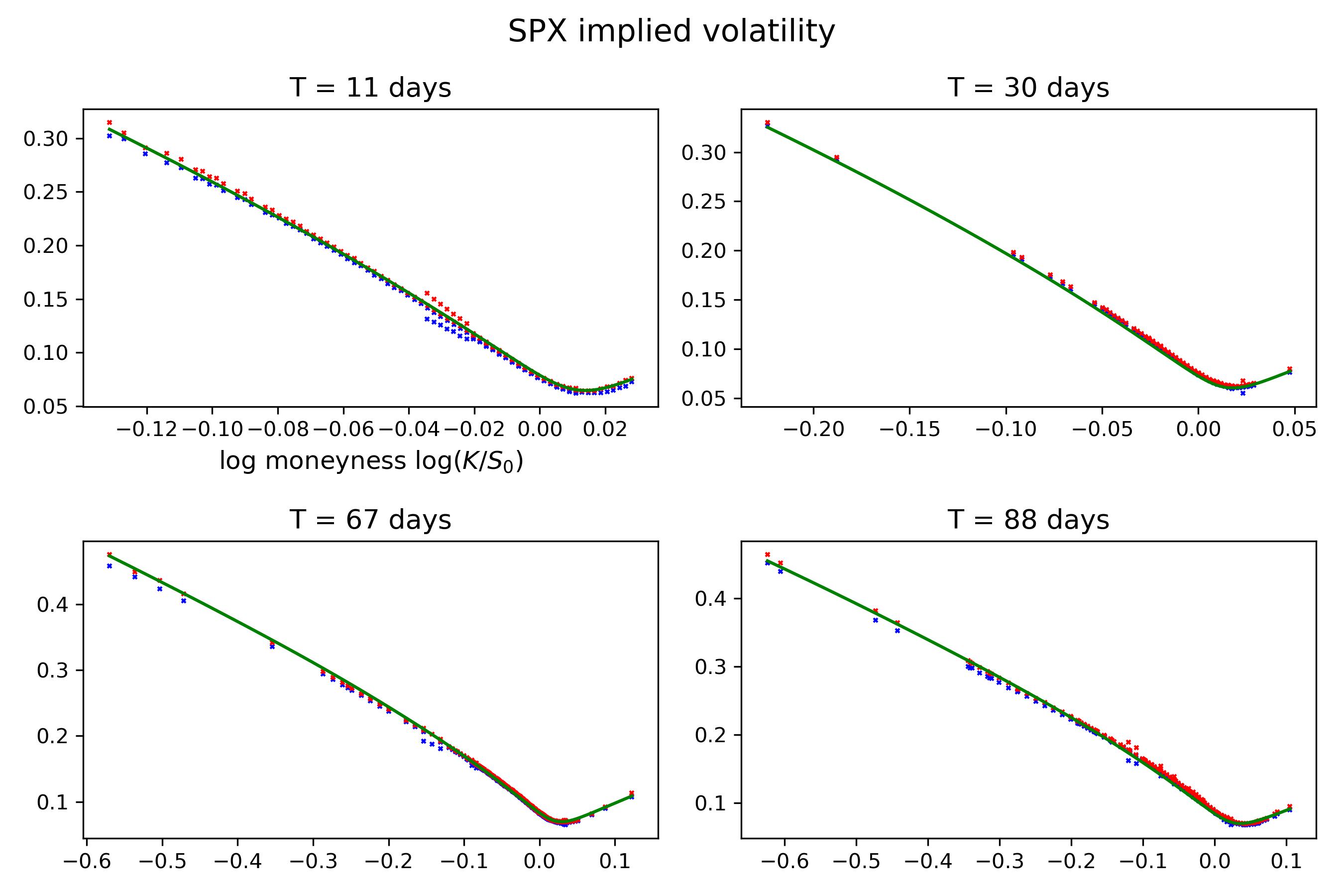}%
    \includegraphics[width=0.5\textwidth]{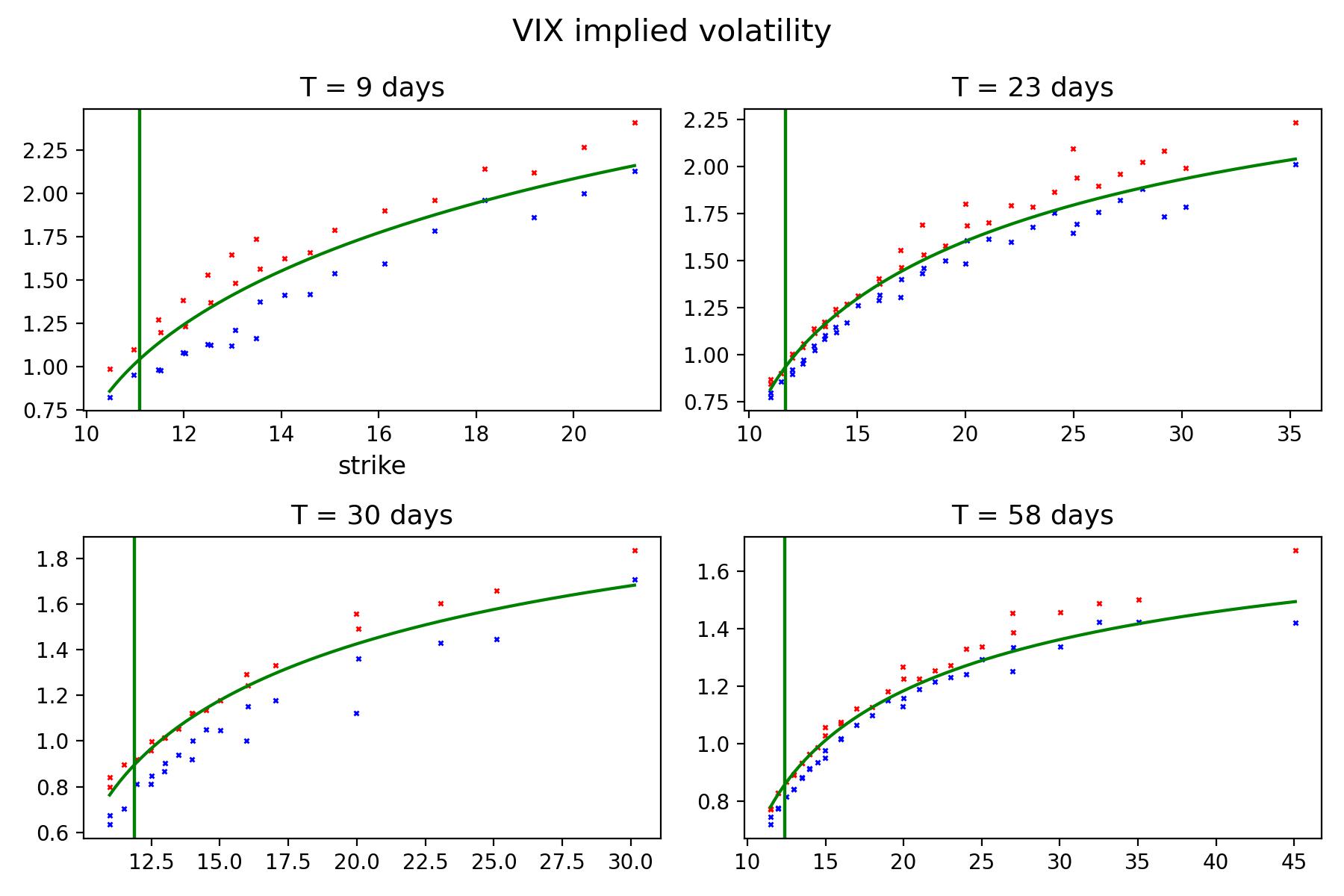}
    \caption{SPX--VIX smiles (bid/ask in blue/red) and VIX futures (vertical black lines) jointly calibrated with our model (full green lines) for 23 October 2017.}
    \label{fig:joint_fit}
  \end{figure}

The forward variance curve has been adjusted to jointly fit the SPX and VIX smiles as shown in Figure \ref{fig:calib_fvc}.

  \begin{figure}[H]
    \centering    \includegraphics[width=0.9\textwidth]{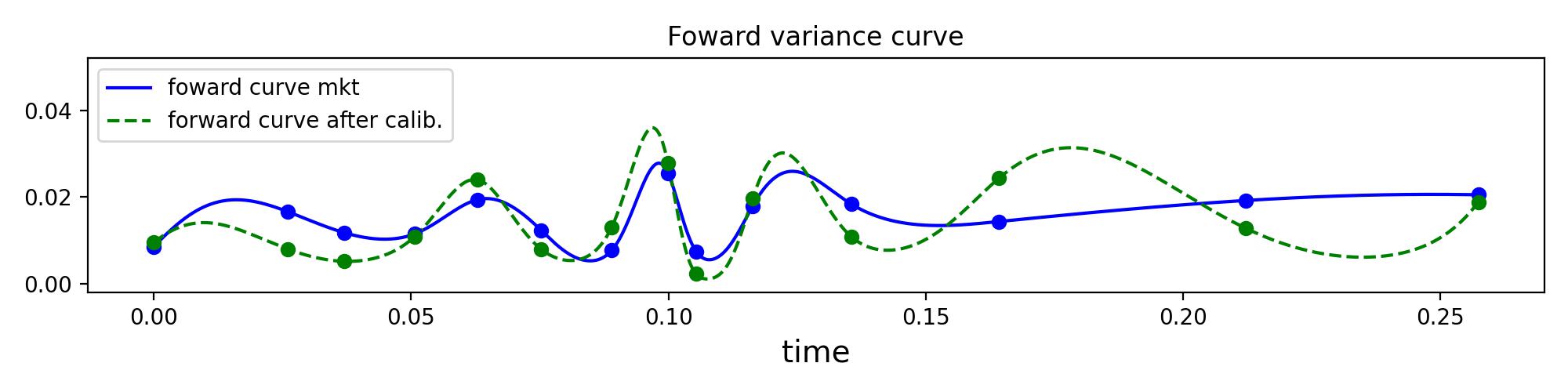}%
    \caption{The blue line represent the forward variance curve stripped from market data using Carr-Madan log contract formula in \eqref{carr_madan_formula}, the dotted green line is the adjusted forward variance curve as part of the calibration to jointly fit SPX/VIX smiles on 23 October 2017, with the round points representing cubic spline nodes.}
    \label{fig:calib_fvc}
  \end{figure}

For more joint surface fits and an empirical study on joint SPX/VIX volatility surface between 2011 and 2022 for our quintic Ornstein-Uhlenbeck model together with its calibrated parameters across time, we refer the reader to \cite{abi2022joint}.

\paragraph{Using parametric forward variance curves {when calibrating fewer slices}.} Instead of extracting forward variance curves from market data, it is also possible to use a parametric form of the forward variance curve for example in the form of:
\begin{align}\label{eq:xi0param}
    \xi_0(t) = a e^{-bt}+c(1-e^{-bt}),
\end{align}
with $a,b,c >0$ to be calibrated. 

The parametric forward variance curves offers less flexibility than that of extracted market forward variance curve discussed before given its rigid form. However, it is still capable to fit two maturity slices of SPX and one slice of VIX. We provide two examples here, with 

\begin{enumerate}
    \item joint fits of SPX options maturing in 9 days and 30 days, and VIX options maturing in 9 days) using parameters $\rho = -0.7316, H = -0.1382, (\alpha_0, \alpha_1, \alpha_3, \alpha_5) = (0.8169, 0.274, 0.1717, 0.0036),$ $ a = 0.0084, b = 2.0436, c = 0.0441$ shown in  Figure \ref{fig:joint_para1},
    \item joint fits of SPX options maturing in 53 days and 88 days, and VIX options maturing in 58 days) with parameters $\rho = -0.7001, H = 0.141, (\alpha_0, \alpha_1, \alpha_3, \alpha_5) = (0.7558, 1, 0.0885, 0.4421),$ $a = 0.012, b = 2.027, c = 0.033$ shown in Figure \ref{fig:joint_para2}.
\end{enumerate}

  \begin{figure}[H]
    \centering
    \includegraphics[width=0.6\textwidth]{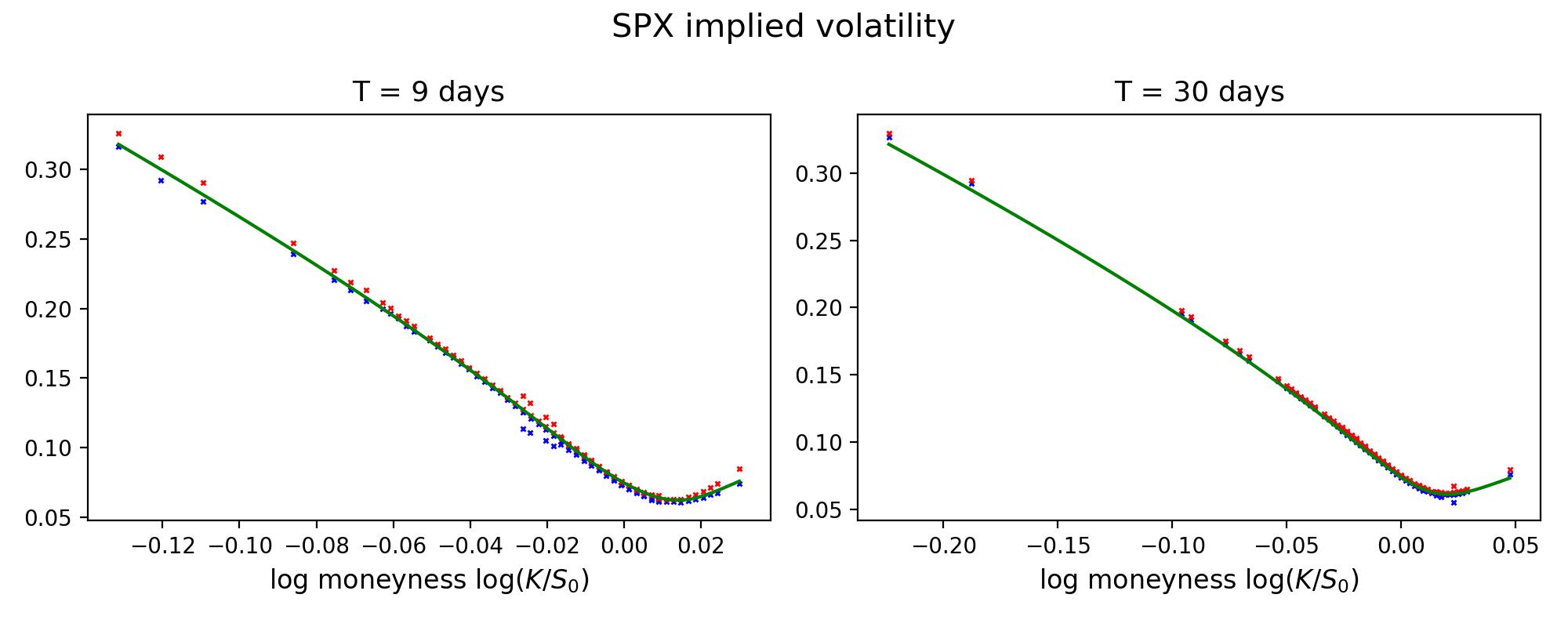}%
    \includegraphics[width=0.3\textwidth]{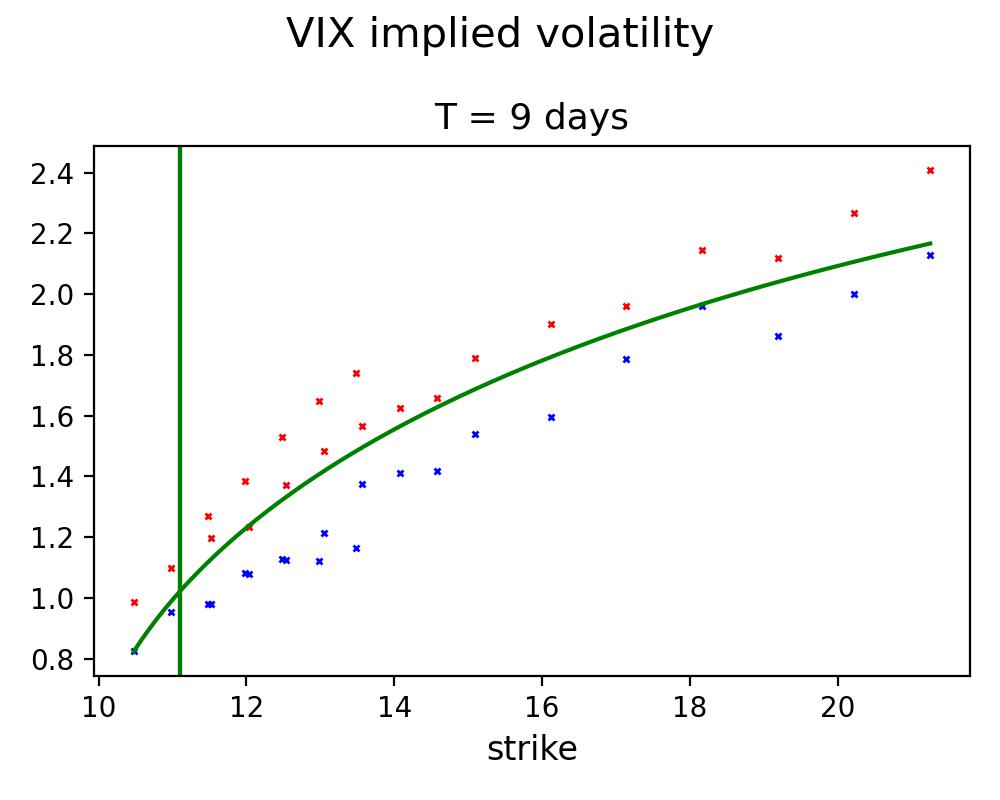}
    \caption{SPX--VIX smiles (bid/ask in blue/red) and VIX futures (vertical black lines) jointly calibrated with our model with  the parametric forward variance curve \eqref{eq:xi0param} for 23 October 2017.} 
    \label{fig:joint_para1}
  \end{figure}

  \begin{figure}[H]
    \centering
    \includegraphics[width=0.6\textwidth]{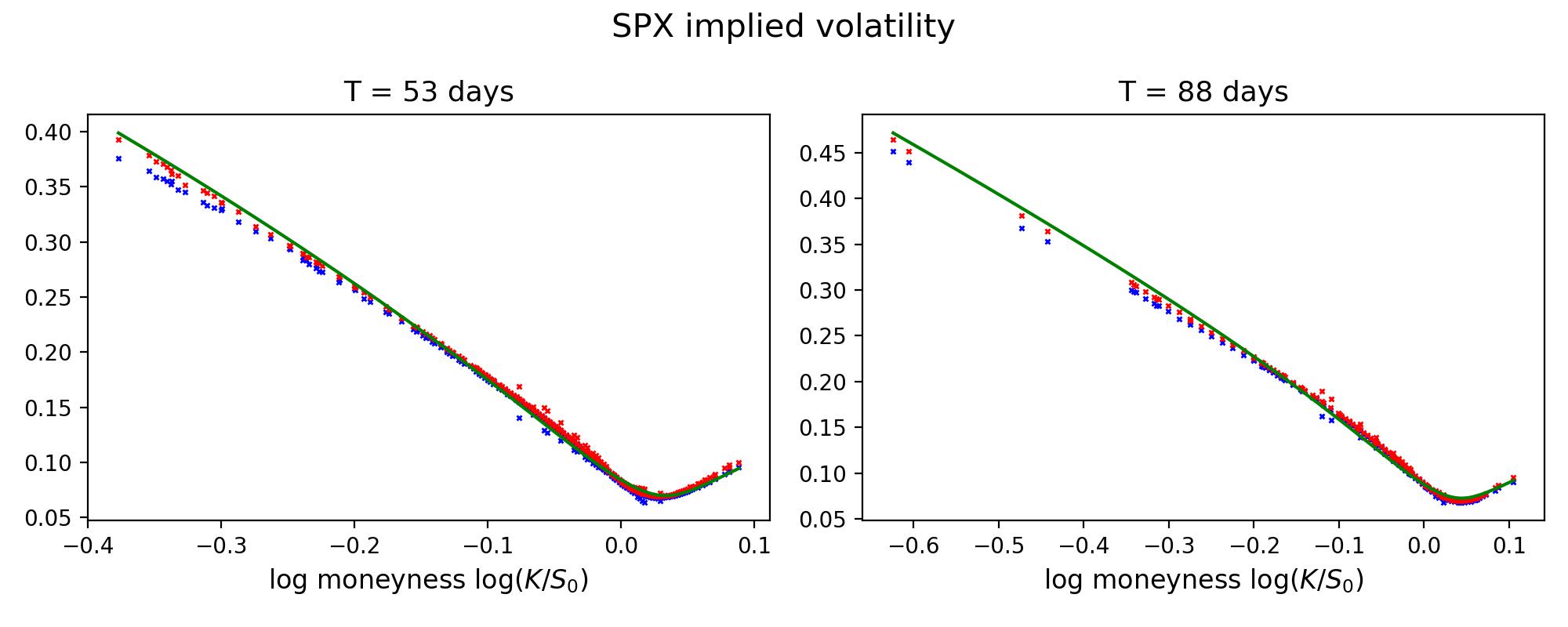}%
    \includegraphics[width=0.3\textwidth]{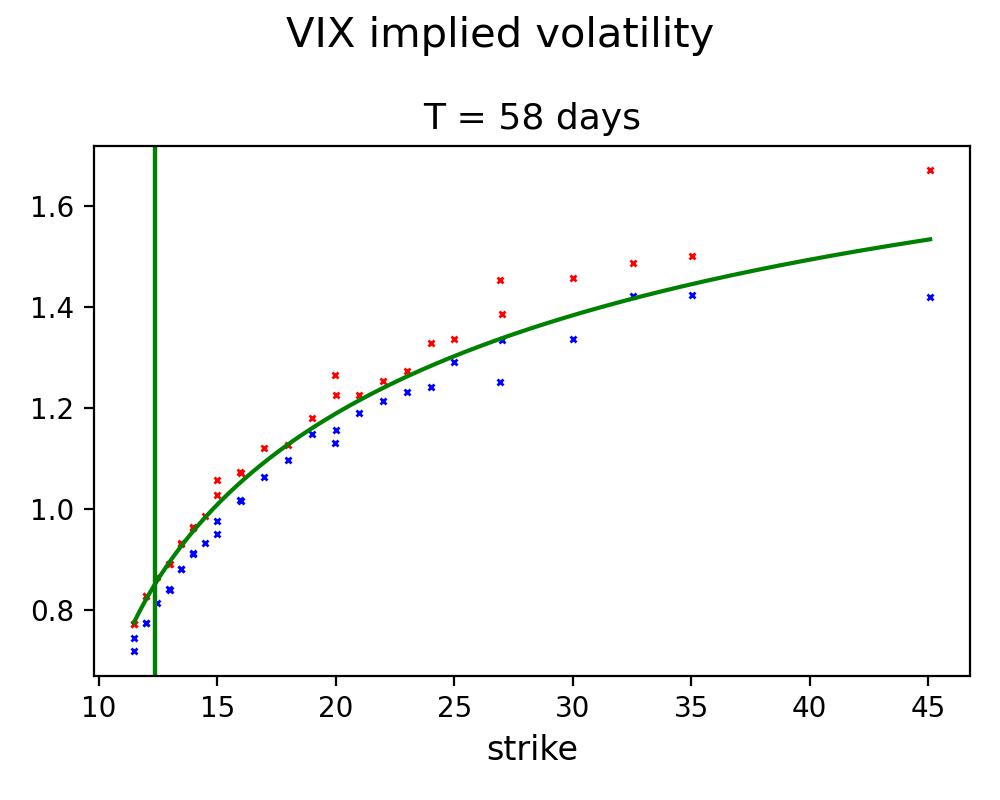}
    \caption{SPX--VIX smiles (bid/ask in blue/red) and VIX futures (vertical black lines) jointly calibrated with our model with  the parametric forward variance curve \eqref{eq:xi0param} for 23 October 2017.}
    \label{fig:joint_para2}
  \end{figure}

\paragraph{Using a time dependent $H$ parameter for fitting  longer maturities.}

To fit even longer maturities beyond 3 and 4 months, we propose to use a time-dependent parametrization of $H$ in \eqref{eq:sde} in the form of
\begin{align}\label{eq:Htime}
H(t) = H_0 e^{-\kappa t} + H_{\infty} (1-e^{-\kappa t}),    
\end{align}
with $H_0, H_{\infty}, \kappa>0$ to be calibrated. With this formulation, $X$ remains a Gaussian Ornstein-Uhlenbeck process with time dependent parameters and can also be simulated exactly. The formula for $\mbox{VIX}_T^2$ remains polynomial in $X_T$ similar to \eqref{eq:VIXclosed}.

Using time dependent parametrization of $H$, together with minor tweaks to the stripped forward variance curve using \eqref{carr_madan_formula} and letting the mean reversion speed $\varepsilon$ free, we can jointly fit the SPX and VIX surface beyond 1 year, with up to 8 slices for SPX and 6 slices for VIX as illustrated on Figure~\ref{Fig:calib_time_dependdent_H}. The calibrated parameters are $\rho = -0.7466, (\alpha_0, \alpha_1, \alpha_3, \alpha_5) = (0,0.0266,0.2513,0.00006)$, $H_0 = 0.3176, H_{\infty} = -1.3665, \kappa = 1.2, \varepsilon = 0.1359$. Figure \ref{fig:fwd_var_curve_time_dependent} shows the forward variance curve $\xi_0(t)$ on 23 October 2017 stripped from the market vs. slightly adjusted forward variance as part of the joint calibration, and Figure \ref{fig:time_dependent_H} shows the value of the calibrated  $H$ in \eqref{eq:Htime} as function of time.

  \begin{figure}[H]
    \centering
    \includegraphics[scale=0.3]{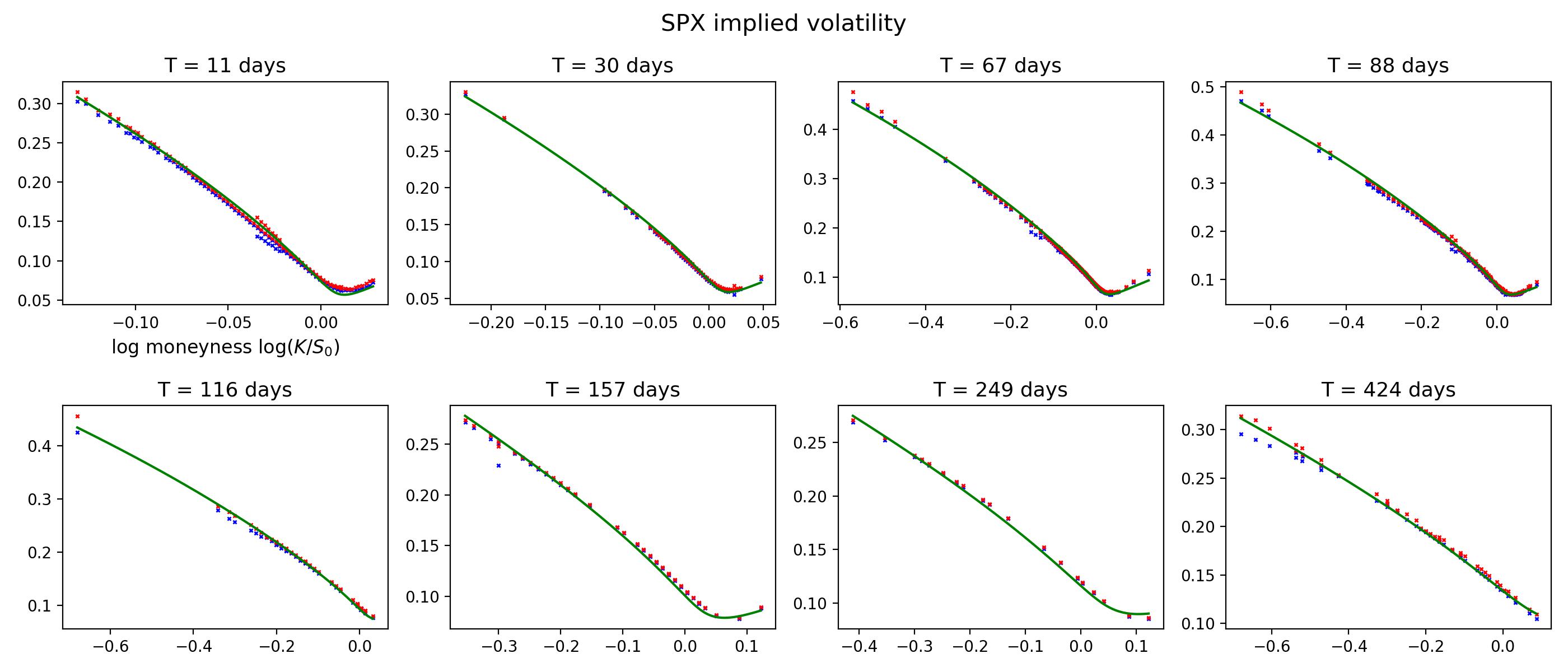}
    \includegraphics[scale=0.3]{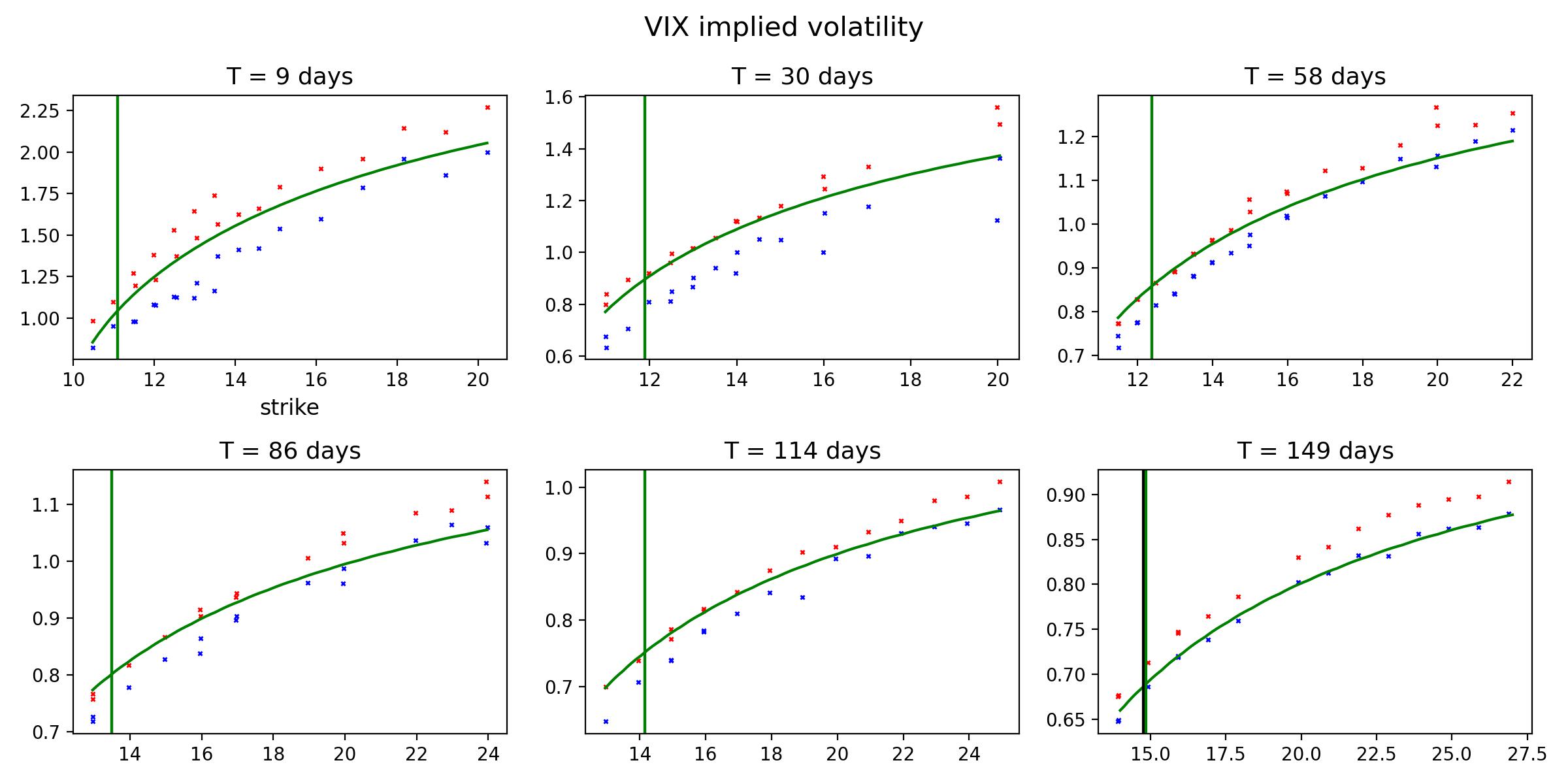}
    \caption{SPX--VIX smiles (bid/ask in blue/red) and VIX futures (vertical black lines) jointly calibrated with our model for time dependent $H$ (full green lines) for 23 October 2017.}
    \label{Fig:calib_time_dependdent_H}
  \end{figure}

  \begin{figure}[H]
    \centering    \includegraphics[width=0.9\textwidth]{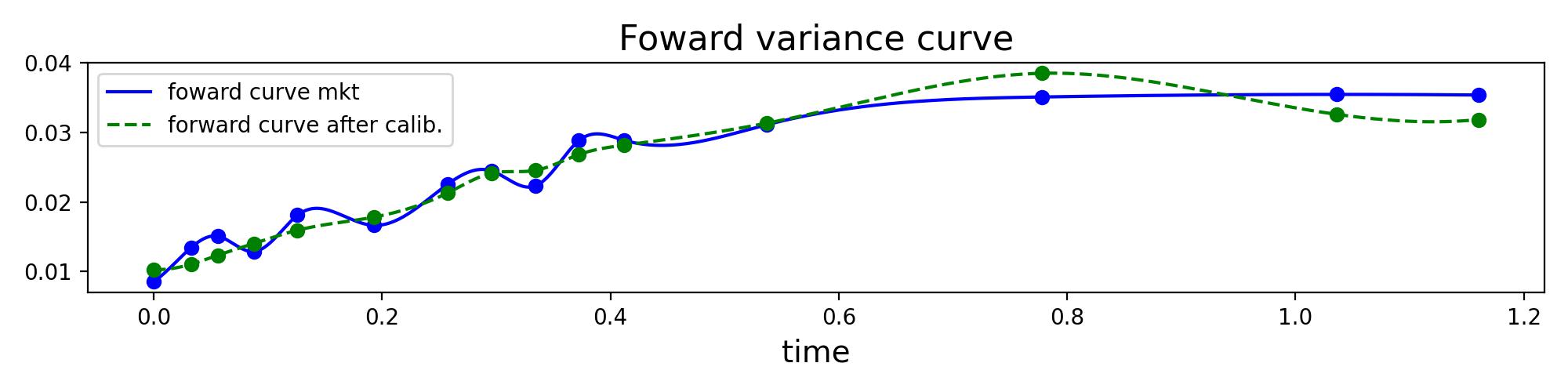}%
    \caption{The blue line represent the forward variance curve stripped from market data using Carr-Madan log contract formula as in \eqref{carr_madan_formula}, the dotted green line is the adjusted forward variance curve as part of the calibration to jointly fit SPX/VIX smiles on 23 October 2017, with the round points representing cubic spline nodes.}
    \label{fig:fwd_var_curve_time_dependent}
  \end{figure}

  \begin{figure}[H]
    \centering    \includegraphics[width=0.9\textwidth]{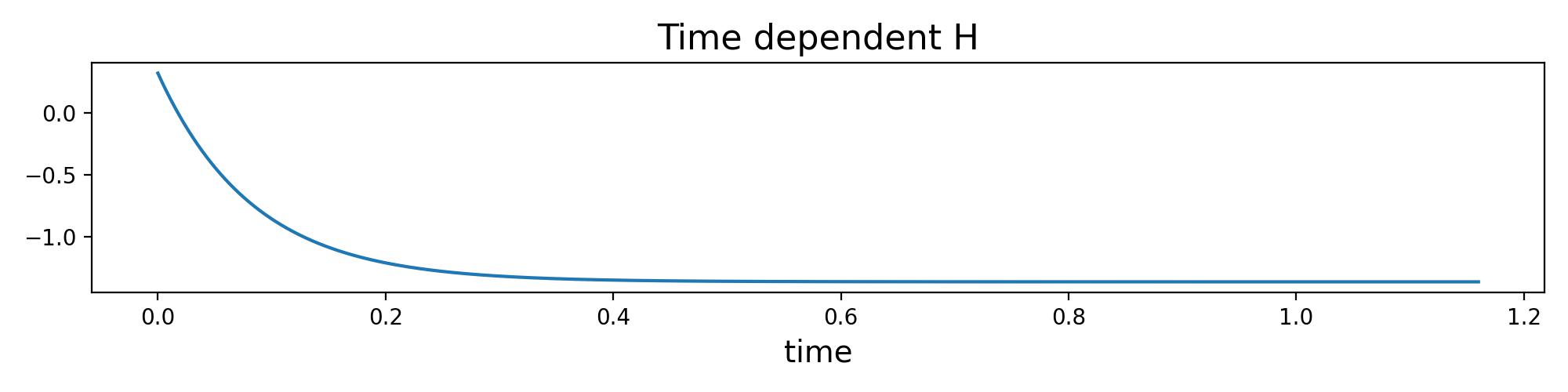}%
    \caption{Value of $H$ as a function of time as part of the calibration to jointly fit SPX/VIX smiles on 23 October 2017.}
    \label{fig:time_dependent_H}
  \end{figure}

\paragraph{Recommendation on the choice of the forward curve $\xi_0$ and the parameters $\varepsilon
$ and $H$ for practical use of the model:}
\begin{itemize}
    \item 
parametric $\xi_0$ as in \eqref{eq:xi0param}, fixed $\varepsilon=1/52$ and constant coefficient $H$ in \eqref{eq:sde} for fits of single slice of VIX and two slices of SPX, 
        \item 
tweaked stripped forward curve $\xi_0$, fixed $\varepsilon=1/52$ and constant $H$ in \eqref{eq:sde} for joint fits on several maturities up to 3 to 4 months,
\item tweaked stripped forward curve $\xi_0$, letting $\varepsilon$ free  and   time-dependent $H$  in \eqref{eq:sde}  as  in \eqref{eq:Htime} for joint fits on several maturities up to 18 months.
\end{itemize}

\section{Additional graphs}\label{appendix}
\subsection{Evolution of calibrated model parameters}\label{calibrated_parameters}
 In this section, we plot the evolution of all calibrated model parameters as part of the joint calibration exercise in \cite{abi2022joint}, where a total of 1,422 days of SPX and VIX joint implied volatility between 2012 and 2022 were calibrated. All model parameters appears to be stable across time, which is desirable from a practical point of view.

  \begin{figure}[H]
    \centering

    \includegraphics[scale=0.3]{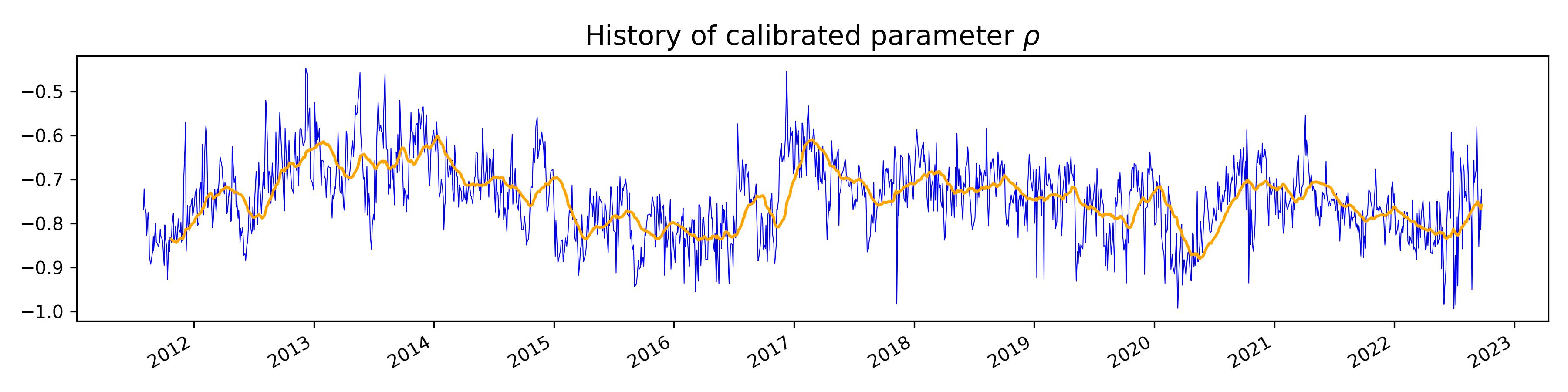}
    \includegraphics[scale=0.3]{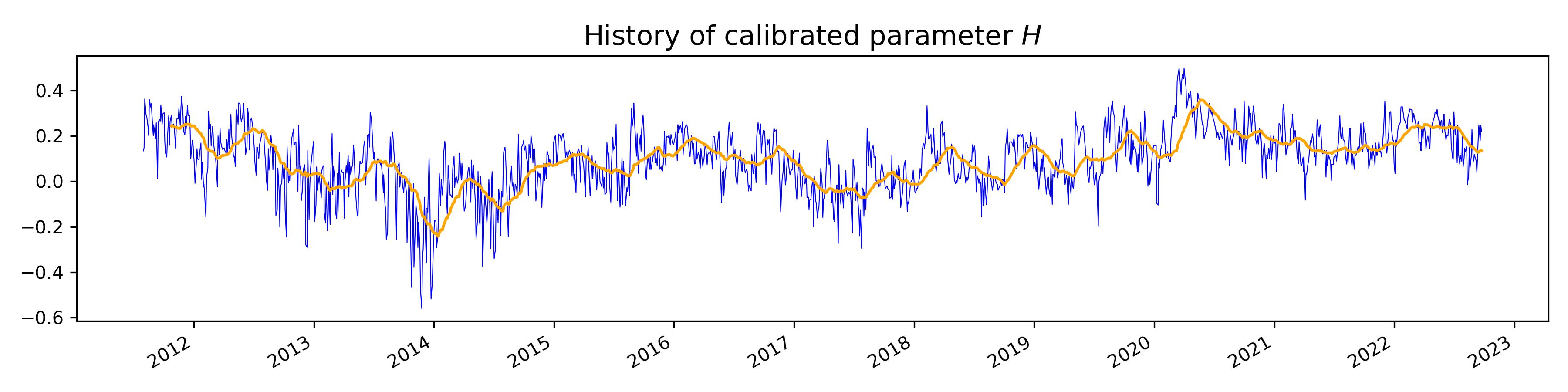}
    \includegraphics[scale=0.3]{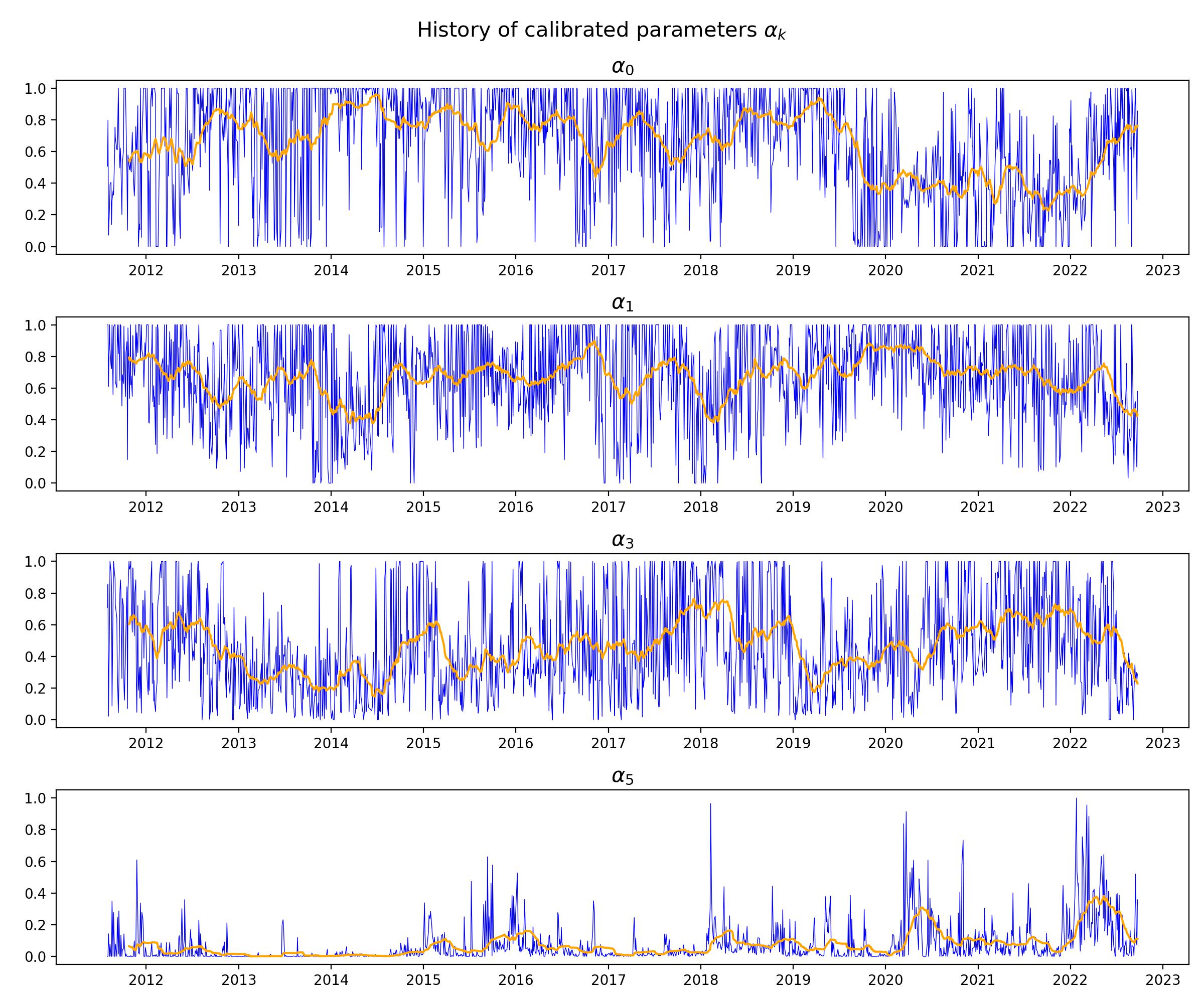}
    \caption{Evolution of the calibrated parameters from the quintic Ornstein Uhlenbeck volatility model. The blue line is the actual value of the calibrated parameters, the orange line is the 30-day moving average.}
    \label{fig:calib_params}
  \end{figure}

\subsection{Model calibration error}\label{calibrated_error} In this section, we take some of the examples provided in the previous sections and re-calibrate the quintic Ornstein Uhlenbeck volatility model to a narrower range of moneyness (near the money).  We then plot the absolute calibration error (model implied volatility vs.~mid implied volatility from market data). To facilitate comparison, we plot the absolute calibration error as a multiplier of half of the bid-ask spread, i.e.~(absolute calibration error)/(0.5 $\times$ bid-ask spread). A multiplier of less than 1 means the model implied volatility is within the bid-ask spread of market data.

  \begin{figure}[H]
    \centering
    \includegraphics[width=0.5\textwidth]{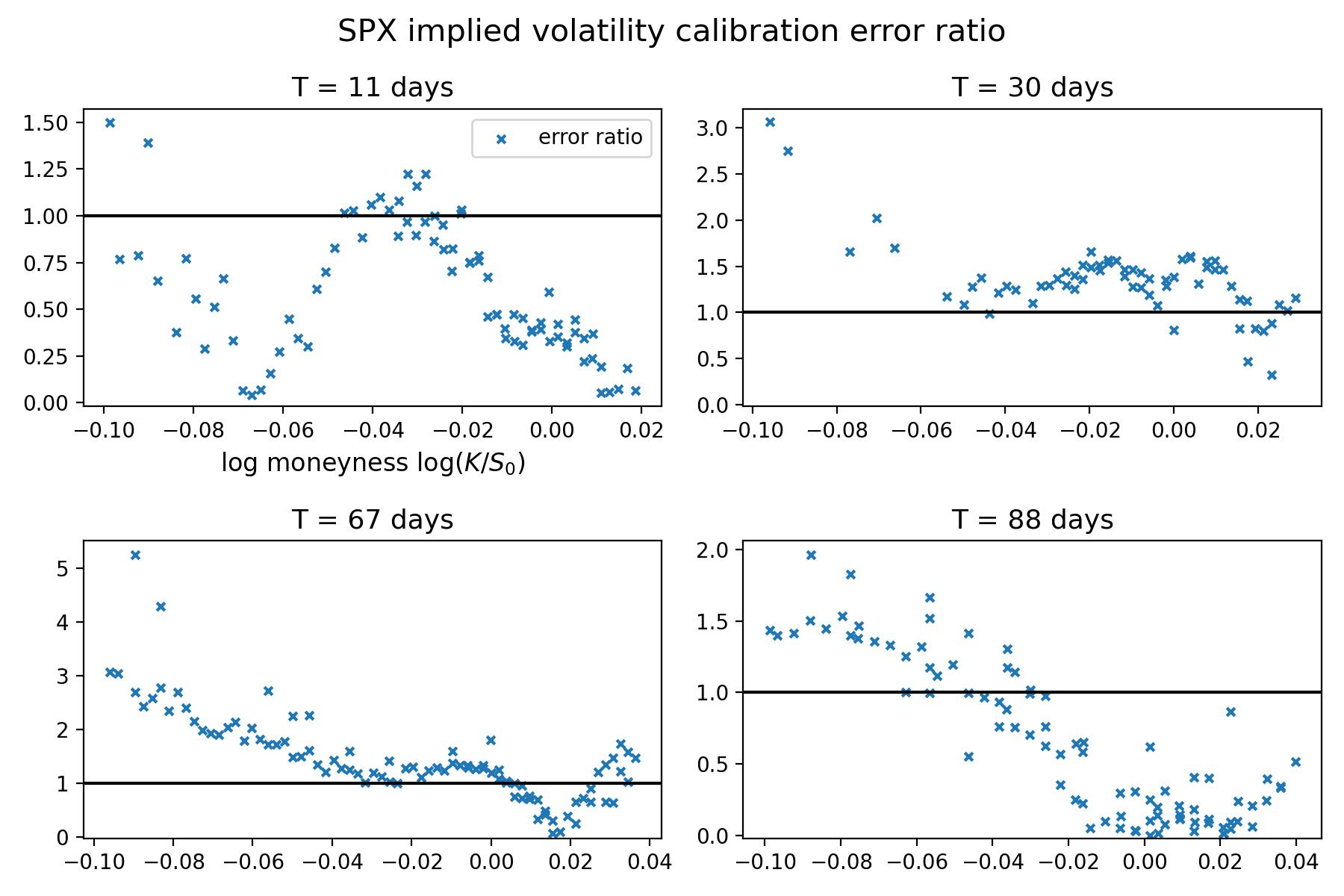}%
    \includegraphics[width=0.5\textwidth]{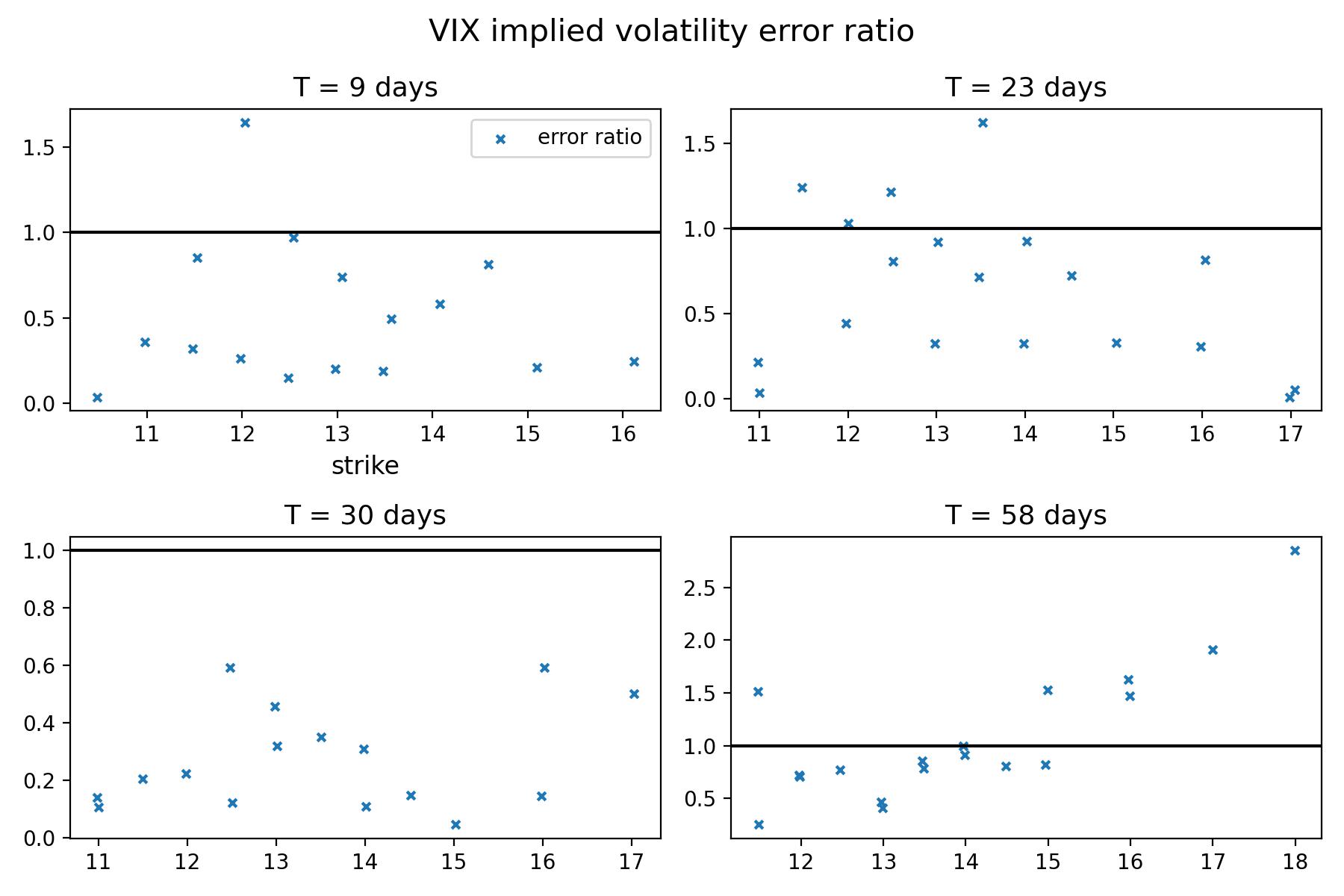}
    \caption{SPX--VIX implied volatility absolute calibration error as a multiplier of half of the bid-ask spread for 23 October 2017 using extracted forward variance curves.}
    \label{fig:joint_fit_error_ratio}
  \end{figure}

  \begin{figure}[H]
    \centering
    \includegraphics[width=0.6\textwidth]{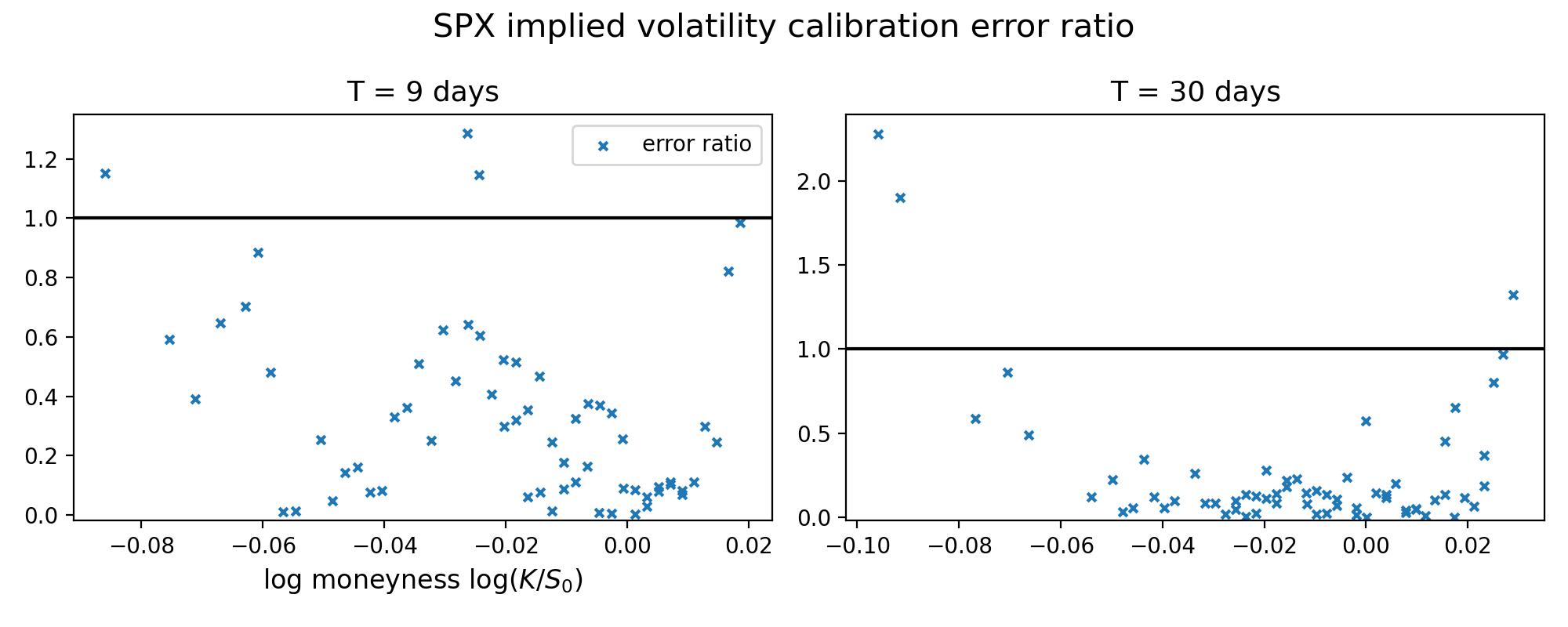}%
    \includegraphics[width=0.3\textwidth]{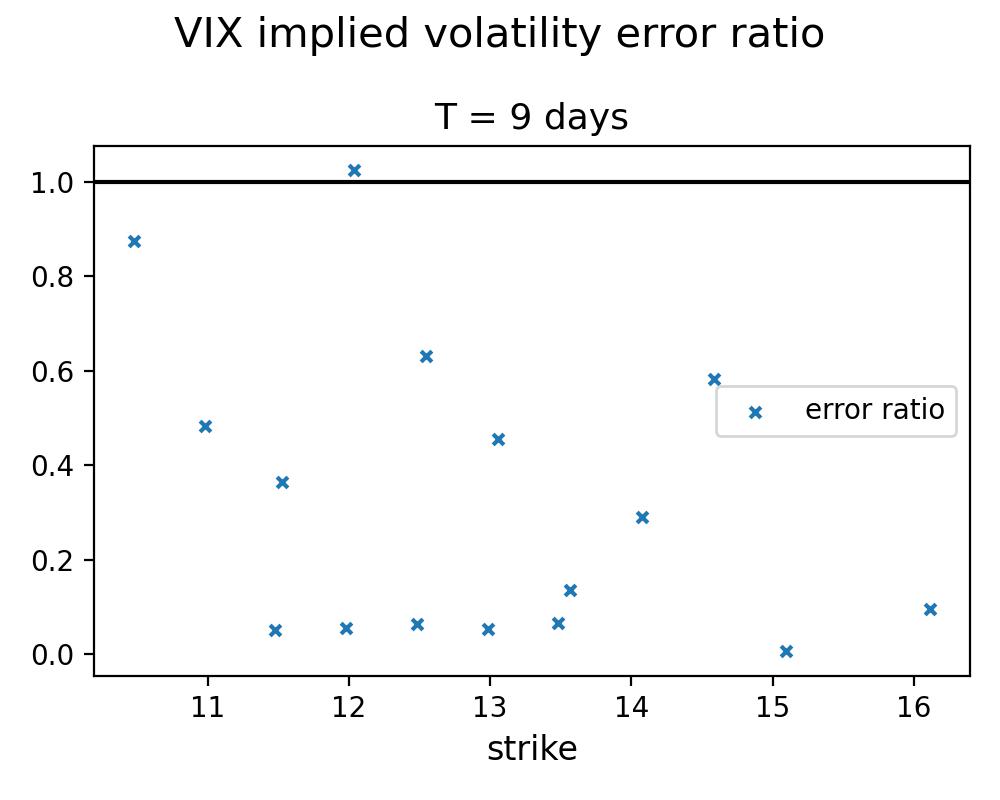}
    \caption{SPX--VIX implied volatility absolute calibration error as a multiplier of half of the bid-ask spread for 23 October 2017 using parametric forward variance curves.}
    \label{fig:joint_para1_error_ratio}
  \end{figure}

  \begin{figure}[H]
    \centering
    \includegraphics[width=0.6\textwidth]{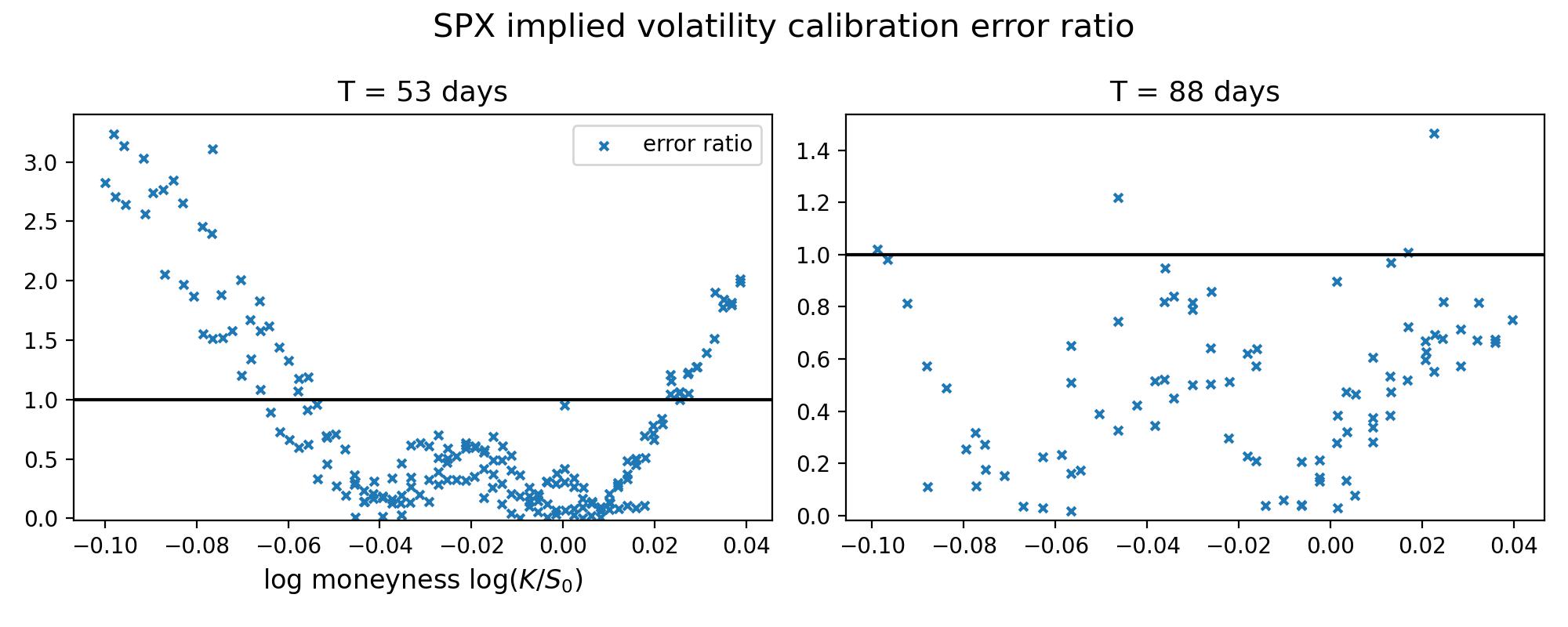}%
    \includegraphics[width=0.3\textwidth]{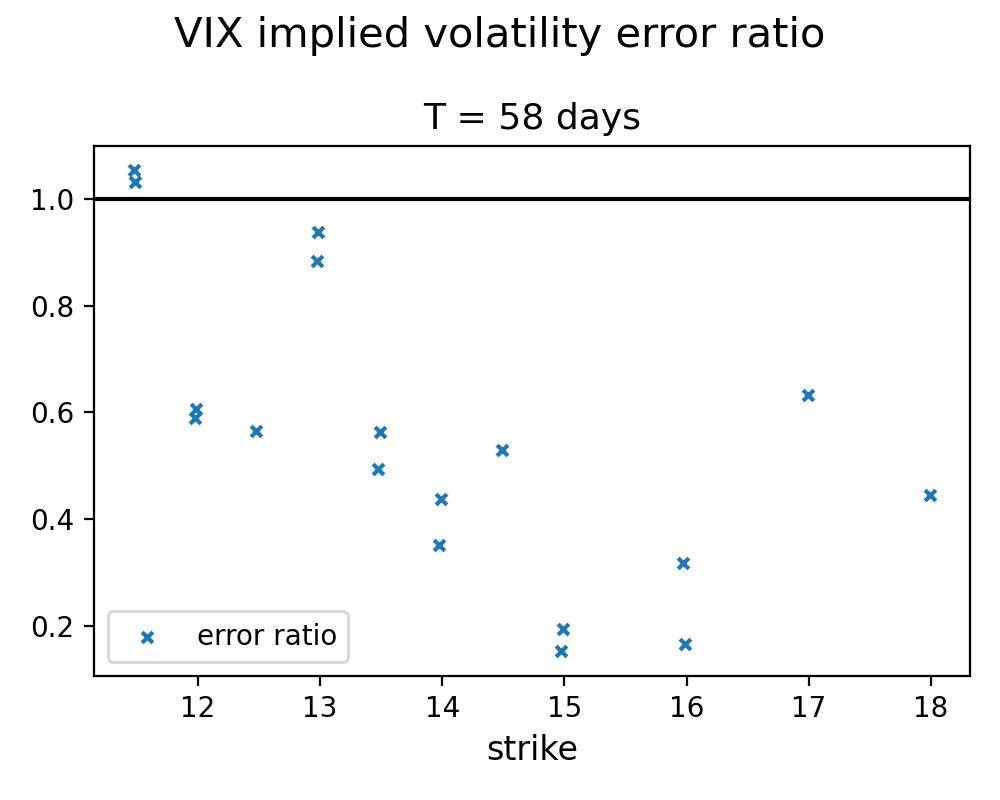}
    \caption{SPX--VIX implied volatility absolute calibration error as a multiplier of half of the bid-ask spread for 23 October 2017 using parametric forward variance curves.}
    \label{fig:joint_para1_error_ratio}
  \end{figure}
These graphs show that the absolute calibration error multiplier is largely below 1 (i.e.~within the bid-ask spread), especially around the at the money level for both SPX and VIX smiles.

{\appendix 
	\section{ 
On the martingale property of $S$}\label{A:mart}

We prove the true martingale property of the stock price $S$ in the quintic Ornstein-Uhlenbeck volatility model for constant forward variance curves. 

\begin{proposition}\label{prop:martingale}
Fix  $ \alpha_5 > 0$ and let $\xi_0$ be such that 
	 $ \xi_0(t) =  \xi^2 \E[p^2(X_t)]$,  for all  $ t\geq 0$ for some constant $\xi > 0$. If $\rho\leq 0$, 
 the process $S$ in \eqref{polynomial_model} is a true martingale.
\end{proposition}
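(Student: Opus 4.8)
The plan is to exploit the special choice of $\xi_0$ to reduce the volatility to a pure polynomial of the driving Ornstein--Uhlenbeck process, and then to certify the martingale property through the classical ``non-explosion of the auxiliary diffusion'' criterion, in which the sign assumption $\rho\le 0$ together with $\alpha_5>0$ does all the work. First I would observe that with $\xi_0(t)=\xi^2\,\E[p^2(X_t)]$ the normalisation in \eqref{polynomial_model} collapses and one simply gets $\sigma_t=\xi\,p(X_t)$, so that $dS_t/S_t=\xi\,p(X_t)\,dB_t$. Being a non-negative continuous stochastic exponential, $S$ is automatically a local martingale and hence a supermartingale; it is a true martingale on $[0,T]$ if and only if $\E[S_T]=S_0$, equivalently if and only if $\mathcal E\!\left(\xi\int_0^\cdot p(X_s)\,dB_s\right)$ has expectation one at time $T$. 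The Novikov and Kazamaki conditions are hopeless here, since $\sigma_t^2$ grows like $X_t^{10}$ and $\E[\exp(c X_t^{10})]=\infty$ for any $c>0$, so a different route is needed.

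The route I would take is the Sin--Lions--Musiela type criterion (see also Mijatovi\'c--Urusov and Jourdain): $S$ is a true martingale on $[0,T]$ if and only if the process $X$, under the candidate measure $\tilde{\Q}$ defined formally by $d\tilde{\Q}/d\Q=S_T/S_0$, does not explode on $[0,T]$. Performing the Girsanov shift $d\tilde W_t=dW_t-\rho\,\xi\,p(X_t)\,dt$ on the $W$-component (the analogous shift on $W^{\perp}$ does not enter the dynamics of $X$), the equation \eqref{eq:sde} picks up the extra drift $\rho\,\xi\,\varepsilon^{H-1/2}p(X_t)$, giving the auxiliary SDE
\begin{equation}
    dX_t=\Big[-(1/2-H)\varepsilon^{-1}X_t+\rho\,\xi\,\varepsilon^{H-1/2}p(X_t)\Big]dt+\varepsilon^{H-1/2}\,d\tilde W_t=:b(X_t)\,dt+\varepsilon^{H-1/2}\,d\tilde W_t.
\end{equation}
The whole problem thus reduces to showing that this one-dimensional diffusion with constant diffusion coefficient and polynomial drift does not reach $\pm\infty$ in finite time.

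Non-explosion is exactly where the hypotheses bite, and I would establish it through a Lyapunov / Khasminskii drift estimate rather than Feller's boundary integrals. Taking $V(x)=1+x^2$, the key quantity is
\begin{equation}
    x\,b(x)=-(1/2-H)\varepsilon^{-1}x^2+\rho\,\xi\,\varepsilon^{H-1/2}\,x\,p(x).
\end{equation}
Since $H\le 1/2$ and $\varepsilon>0$, the first term is $\le 0$. For the second, note that $x\,p(x)=\alpha_5 x^6+\alpha_3 x^4+\alpha_1 x^2+\alpha_0 x$ has all its top coefficients non-negative with $\alpha_5>0$, so $x\,p(x)\to+\infty$ as $|x|\to\infty$ and is bounded below by some $-M$. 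Because $\rho\le 0$ and $\xi,\varepsilon^{H-1/2}>0$, the product $\rho\,\xi\,\varepsilon^{H-1/2}\,x\,p(x)$ is $\le 0$ wherever $x\,p(x)\ge 0$ and is at most $|\rho|\,\xi\,\varepsilon^{H-1/2}M$ elsewhere, hence uniformly bounded above by a constant. Therefore $x\,b(x)\le C$ for all $x$, so the generator of the auxiliary SDE satisfies $\mathcal L V=2x\,b(x)+\varepsilon^{2H-1}\le C'$, which is the standard Khasminskii one-sided condition guaranteeing global existence and non-explosion on every $[0,T]$. Invoking the criterion of the previous paragraph then yields $\E[S_T]=S_0$ and concludes the proof.

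The hard part will not be the computation but pinning down the correct martingale criterion and checking its hypotheses: one must justify the equivalence between the true-martingale property of $S$ and non-explosion of $X$ under $\tilde{\Q}$ (which requires weak existence for the pre-shift system and a localisation argument), and one must confirm that the highest-order term really is $\alpha_5 x^6$ so that the mean-reversion induced by $\rho\le 0$ dominates every lower-order contribution. It is precisely the interplay of $\rho\le 0$ and $\alpha_5>0$ --- a negative correlation turning the leading odd-degree term of $p$ into a confining, polynomial-strength restoring force --- that prevents the Girsanov-shifted Ornstein--Uhlenbeck process from exploding, and hence makes $S$ a genuine martingale.
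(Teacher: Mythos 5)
Your proposal is correct and rests on the same mechanism as the paper's proof: both reduce the question to non-explosion of the auxiliary diffusion obtained by adding the Girsanov drift $\varepsilon^{H-1/2}\rho\,\xi\, p(x)$ to the Ornstein--Uhlenbeck dynamics \eqref{eq:sde}, and in both cases it is exactly the pair of hypotheses $\rho\le 0$, $\alpha_5>0$ that turns this extra drift into a confining force. The implementations differ in two respects. First, to deal with the fact that $S$ is driven by the correlated motion $B=\rho W+\sqrt{1-\rho^2}\,W^{\perp}$ rather than by $W$ itself, the paper conditions on $\mathcal F^W_t$ and integrates out the $W^{\perp}$-component exactly (it is conditionally Gaussian), which reduces $\E[S_t]$ to $S_0\,\E[\tilde S^{\rho}_t]$ for the one-dimensional exponential \eqref{S_rho}; the martingality of $\tilde S^{\rho}$ is then read off from the ready-made characterization in \cite[Theorem 2.1]{mijatovic2012martingale}. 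You instead apply a Sin-type ``martingale iff no explosion under the candidate measure'' criterion directly to the two-Brownian system; this is a legitimate route, but the equivalence you invoke is precisely the step that must be justified by a localisation argument (stop at $\tau_n=\inf\{t:|X_t|\ge n\}$, apply Novikov on each stochastic interval, pass to the limit in $\Q_n(\tau_n>T)$), and your write-up flags this as ``the hard part'' without carrying it out --- the paper's conditioning trick makes that work unnecessary. Second, for non-explosion itself the paper uses Feller's boundary test via the scale function ($\tilde s(\pm\infty)=\pm\infty$ because the exponent $f(y,c)$ has positive leading coefficient in $y^6$), whereas you use a Khasminskii--Lyapunov bound $x\,b(x)\le C$ with $V(x)=1+x^2$; your computation is correct (the mean-reversion term is $\le 0$ since $H\le 1/2$, and $x\,p(x)$ is bounded below so its product with $\rho\,\xi\,\varepsilon^{H-1/2}\le 0$ is bounded above), and this route is arguably more elementary and more robust, though the scale-function route comes packaged with the if-and-only-if martingale criterion. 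To make your version complete, write out the localisation underlying the Sin-type reduction; everything else is sound.
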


A crucial ingredient for proving Proposition~\ref{prop:martingale}	is 
 the process
\begin{align}\label{S_rho}
    \Tilde S_t^{\rho} := \exp \left(-\frac{1}{2}\int_0^t b_{\rho}^2(X_s)ds +  \int_0^t b_{\rho}(X_s)dW_s\right),
\end{align}
with 
$$ b_{\rho}(x):= \rho \xi p(x) = \rho \xi (\alpha_0+\alpha_1 x + \alpha_3 x^3 + \alpha_5 x^5).$$ 

The martingality of the process $\Tilde S^\rho$ plays a crucial role in determining the martingale property of $S$ in \eqref{polynomial_model}. We first prove the martingality of $\Tilde S^{\rho}$.
	\begin{lemma}\label{propA} Under the assumptions of Proposition~\ref{prop:martingale}, the process $\Tilde S^{\rho}$ in \eqref{S_rho} is a true martingale. 
    \end{lemma}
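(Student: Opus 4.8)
The plan is to recognize $\tilde S^{\rho}$ as the stochastic (Dol\'eans--Dade) exponential of the continuous local martingale $M_t := \int_0^t b_{\rho}(X_s)\,dW_s$. This makes $\tilde S^{\rho}$ automatically a nonnegative local martingale, hence a supermartingale with $\E[\tilde S^{\rho}_t]\leq 1$, and reduces the claim to showing $\E[\tilde S^{\rho}_T]=1$ for every $T>0$. (The case $\rho=0$ is trivial since then $b_{\rho}\equiv 0$ and $\tilde S^{\rho}\equiv 1$, so I take $\rho<0$.) I would explicitly \emph{avoid} a global Novikov/Kazamaki estimate: because $b_{\rho}$ is a degree-five polynomial in the Gaussian process $X$, the integrand in Novikov's condition $\E[\exp(\tfrac12\int_0^T b_{\rho}^2(X_s)\,ds)]$ grows like $x^{10}$, so this expectation is infinite and the criterion is useless here. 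Instead I would run a localization-plus-non-explosion argument of Sin type.

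First I would localize. Set $\tau_n=\inf\{t:\ |X_t|\geq n\}$. On $[0,\tau_n]$ the coefficient $b_{\rho}$ is bounded, so the stopped process $\tilde S^{\rho}_{\cdot\wedge\tau_n}$ satisfies Novikov's condition and is a genuine martingale; this allows one to define consistent probability measures $\mathbb{Q}_n$ by $d\mathbb{Q}_n/d\mathbb{Q}\big|_{\mathcal F_t}=\tilde S^{\rho}_{t\wedge\tau_n}$. By Girsanov, under $\mathbb{Q}_n$ the process $\tilde W_t:=W_t-\int_0^{t\wedge\tau_n}b_{\rho}(X_s)\,ds$ is a Brownian motion, so up to $\tau_n$ the driver $X$ solves the shifted SDE
\[
dX_t=\Big[-(1/2-H)\varepsilon^{-1}X_t+\varepsilon^{H-1/2}b_{\rho}(X_t)\Big]dt+\varepsilon^{H-1/2}\,d\tilde W_t .
\]
Next, using $\E[\tilde S^{\rho}_{T\wedge\tau_n}]=1$ and splitting on $\{\tau_n>T\}$ and $\{\tau_n\leq T\}$, I would derive the Sin-type identity $\E[\tilde S^{\rho}_T]=\lim_{n\to\infty}\mathbb{Q}_n[\tau_n>T]$. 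Consequently $\tilde S^{\rho}$ is a true martingale on $[0,T]$ if and only if the shifted diffusion does not explode before $T$, i.e. $\mathbb{Q}[\tau_\infty>T]=1$ with $\tau_\infty=\lim_n\tau_n$.

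It then remains to verify non-explosion by a Khasminskii Lyapunov test, and this is where the sign assumptions pay off. The generator of the shifted dynamics is $\mathcal L=\big[-(1/2-H)\varepsilon^{-1}x+\varepsilon^{H-1/2}b_{\rho}(x)\big]\partial_x+\tfrac12\varepsilon^{2H-1}\partial_{xx}$. Taking $V(x)=1+x^2$, the dominant term of $\mathcal L V$ is $2\varepsilon^{H-1/2}\rho\,\xi\,\alpha_5\,x^6$; since $\rho\leq 0$, $\xi>0$ and $\alpha_5>0$, this leading coefficient is nonpositive, so $\mathcal L V$ is bounded above by a constant $C$, whence $\mathcal L V\leq C\,V$. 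Khasminskii's criterion then yields $\mathbb{Q}[\tau_\infty>T]=1$, closing the argument. I expect the Lyapunov verification to be essentially immediate (the quintic drift is strongly mean-reverting precisely because of the sign conditions $\rho\leq 0$, $\alpha_5>0$), so the main obstacle will be the measure-theoretic bookkeeping of the Sin step: constructing the candidate measure consistently on path space from the localizing sequence $(\mathbb{Q}_n)$ and justifying the identity $\E[\tilde S^{\rho}_T]=\mathbb{Q}[\tau_\infty>T]$ together with the equivalence ``non-explosion $\iff$ true martingale'' at the required level of rigor.
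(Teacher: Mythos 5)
Your argument is correct, and it pivots on the same object as the paper's proof --- the drift-shifted auxiliary diffusion $d\tilde X_t = \bigl(-(1/2-H)\varepsilon^{-1}\tilde X_t + \varepsilon^{H-1/2} b_{\rho}(\tilde X_t)\bigr)dt + \varepsilon^{H-1/2}d\tilde W_t$, whose non-explosion is equivalent to the martingality of $\tilde S^{\rho}$ --- but both the reduction and the verification are done differently. The paper does not carry out the Sin-type localization by hand: it invokes the ready-made characterization of Mijatovi\'c and Urusov (their Theorem 2.1, after checking local integrability of $1/\mu$, $\mu/\sigma^2$ and $b_{\rho}^2/\sigma^2$), and then certifies the boundary behaviour through the scale function $\tilde s(x)=\int_c^x \tilde p(y)\,dy$ with $\tilde p=\exp\{f(\cdot,c)\}$: since $f(y,c)$ is a degree-six polynomial whose leading coefficient is positive exactly because $\rho<0$, $\xi>0$, $\alpha_5>0$, one gets $\tilde s(+\infty)=+\infty$ and $\tilde s(-\infty)=-\infty$, so neither boundary is attainable. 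You instead rebuild the reduction from first principles (stopped Novikov, Girsanov under $\mathbb{Q}_n$, the identity $\E[\tilde S^{\rho}_T]=\lim_n \mathbb{Q}_n[\tau_n>T]$) and then certify non-explosion with a Khasminskii test on $V(x)=1+x^2$, where the same sign condition makes the leading term $2\varepsilon^{H-1/2}\rho\xi\alpha_5 x^6$ of $\mathcal{L}V$ tend to $-\infty$; your preliminary remark that a global Novikov bound is hopeless (it would require $\E[\exp(c X_s^{10})]<\infty$ for a Gaussian $X_s$) is also accurate. The trade-off is clear: your route is self-contained and more robust --- a Lyapunov inequality survives perturbations of the model for which an explicit scale function would be awkward --- at the price of the path-space bookkeeping you flag yourself, whereas the paper outsources precisely that bookkeeping to a citable theorem and reduces the whole verification to a one-line sign check on a polynomial exponent. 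Both proofs consume the hypotheses $\rho\le 0$ and $\alpha_5>0$ in the same essential place.
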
 
\begin{proof}
If $\rho=0$,  the process $\tilde S^{\rho}$ is (trivially) a martingale equal to $1$. For $\rho<0$, we make use of the general characterization in \cite[Theorem 2.1]{mijatovic2012martingale}.\footnote{ We are indebted to an anonymous referee for pointing out  this reference.}  Following the paper's notation, we set $l=-\infty, r = +\infty$ with $x \in (l,r)$ and write the process:
 \[
 dX_t = \mu (X_t) dt + \sigma(X_t)dW_t, \quad X_0 = 0, 
 \]
where $\mu (x) = ax, a\leq 0$ and $\sigma(x) = \eta$, with  $a= -(1/2-H)\varepsilon^{-1} $ and $\eta=\varepsilon^{H-1/2}$.   One can easily check that $\sigma (x) \neq 0$ for all $x \in \R$, $1/\mu$, $\mu/\sigma^2$ and $ b_\rho^2/\sigma^2$ are all locally integrable functions, so that the assumptions of \cite[Theorem 2.1]{mijatovic2012martingale} are met. Next, we introduce the auxiliary process $\tilde X$:
$$  d\Tilde X_t = (\mu + \eta b_{\rho})(\tilde X_t) dt + 
 \eta d\Tilde W_t,$$
with its corresponding scale function 
\begin{equation}\label{scale_function}
\Tilde s(x) = \int_c^x \Tilde p(y)dy,
\end{equation}
for  $c\in \R$. We also define the function $\Tilde p(y)$ as:
\[
\Tilde p(y) := \exp\left(\int_c^y -\frac{2au+2\eta b_\rho(u)}{\eta^2} du\right) = \exp\{f(y,c)\},
\]
with
\[
f(y,c) = -\frac{1}{\eta^2}\left[a(y^2-c^2)+2\eta (B_\rho(y)- B_\rho(c))\right],
\]
and $B_\rho$ the anti-derivative of $b_\rho$:
\[
B_\rho(y) = \rho \xi (\alpha_0 y+\frac{\alpha_1}{2}y^2 + \frac{\alpha_3}{4}y^4+\frac{\alpha_5}{6}y^6).
\]
The function $y\mapsto f(y,c)$ is a polynomial in $y$ of which the leading term $y^6$  has even power with positive coefficient since $ \alpha_5>0, \xi >0$ and $\rho < 0$. Therefore,
\begin{equation}
  \begin{aligned}
\Tilde s(+\infty) &= \int_c^{+\infty}\Tilde p(y) dy = \int_c^{+\infty} \exp\{f(y,c)\}dy = +\infty,\\
\Tilde s(-\infty) &= -\int_{-\infty}^{c} \Tilde p(y) dy = -\int_{-\infty}^c \exp\{f(y,c)\}dy = -\infty,
  \end{aligned}
\end{equation}
so that $\Tilde X_t$ does not exit the state space $(-\infty,+\infty)$ at the boundary $+\infty$ and $-\infty$. Applying \cite[Theorem 2.1]{mijatovic2012martingale}, which give us that $\Tilde S_t^\rho$ is a martingale. 
\end{proof}

\begin{proof}[Proof of Proposition~\ref{prop:martingale}]
It follows from \eqref{polynomial_model}, that   $S$ is a local martingale and non-negative, since 
\begin{align*}
	S_t = S_0 \exp\left( -\frac  12 \int_0^t \frac{\xi_0(u)}{\mathbb E[p^2(X_u)]} p^2(X_s) du  + \int_0^t\sqrt{  \frac{\xi_0(u)}{\mathbb E[p^2(X_u)]}} p(X_u) \left(\rho dW_u + \sqrt{1-\rho^2} dW_u^\perp\right)\right).
\end{align*}

It is therefore  a supermartingale by Fatou's lemma.  To show that is a true martingale, it suffices to argue that $\E[S_t]= S_0$ for any $t\in\R_+$. For this, we fix $t>0$ and we start by getting rid of $W^{\perp}$, by  conditionning on $\mathcal F^W_t$, to get 
\begin{align}
\E\left[S_t\right] &= S_0 \E\left[ \exp \{-\frac{1}{2}\int_0^t  \xi^2 p^2(X_s)ds + \rho \int_0^t   \xi p(X_s)dW_s\} \E\left[ \exp \{ \sqrt{1-\rho^2} \int_0^t  \xi^2 p(X_s)dW^{\perp}_s \} \Mid \mathcal{F}^{W}_t \right] \right].
\end{align}
Conditional on $\mathcal F^W_t$, the random variable $\int_0^t  \xi p(X_s)dW^{\perp}_s$ is a centered Gaussian random variable with variance $\int_0^t  \xi^2  p^2(X_s)ds$, which leads to
\begin{align}
\E\left[S_t\right]&=S_0 \E\left[ \tilde S^\rho_t\right],
\end{align}
with $\tilde S^{\rho}$ defined in \eqref{S_rho}.
From Lemma \ref{propA}, $\tilde S^\rho$ is a martingale, which shows that $\E\left[S_t\right]=S_0$ and completes the proof.
\end{proof}

}

\bibliographystyle{plainnat}
\bibliography{bibl.bib}

\end{document}